\documentclass[runningheads]{llncs}

\usepackage{graphicx}
\usepackage[dvipsnames]{xcolor}
\usepackage[colorlinks=true, allcolors=blue,linkcolor=blue, citecolor=blue,backref=page]{hyperref}
\usepackage{listings}
\usepackage{amssymb}
\usepackage{amsfonts}
\usepackage{mathtools}
\usepackage{maple}
\usepackage[utf8]{inputenc}
\usepackage{amsmath}
\usepackage{breqn}
\usepackage{orcidlink}
\usepackage{mathrsfs}
\usepackage{listings}
\usepackage{microtype}
\usepackage{moreverb}
\usepackage{verbatim}
\usepackage{sverb}
\usepackage{geometry}
\usepackage{color}
\usepackage{makecell}
\usepackage{url}
\usepackage{cleveref}
\usepackage{textcomp}
\usepackage{moreverb}
\usepackage{verbatim}
\usepackage{sverb}
\usepackage{algorithm}
\usepackage{algpseudocode}
\usepackage{algorithmicx}
\usepackage{euscript}

\lstset{basicstyle=\ttfamily,breaklines=true,columns=flexible}
\newcommand*{\QEDA}{\hfill\ensuremath{\blacksquare}}

\lstset{basicstyle=\ttfamily,breaklines=true,columns=flexible}
\newcommand*{\CQFD}{\hfill\ensuremath{\square}}

\newcommand{\NN}{\mathbb{N}}

\newcommand{\KK}{\mathbb{K}}

\setlength\parindent{14pt}

\begin{document}
    \title{Arithmetic of D-Algebraic Functions}

\author{Bertrand Teguia Tabuguia\orcidlink{0000-0001-9199-7077}}

\authorrunning{B. Teguia Tabuguia}

\institute{Max Planck Institute for Mathematics in the Sciences, Leipzig, Germany\\
           Department of Computer Science, University of Oxford, United Kingdom\\
    \email{bertrand.teguia@cs.ox.ac.uk}}

\maketitle              

\begin{abstract}
    We are concerned with the arithmetic of solutions to ordinary or partial nonlinear differential equations which are algebraic in the indeterminates and their derivatives. We call these solutions D-algebraic functions, and their equations are algebraic (ordinary or partial) differential equations (ADEs). The general purpose is to find ADEs whose solutions contain specified rational expressions of solutions to given ADEs. For univariate D-algebraic functions, we show how to derive an ADE of smallest possible order. In the multivariate case, we introduce a general algorithm for these computations and derive conclusions on the order bound of the resulting algebraic PDE. Using our accompanying Maple software, we discuss applications in physics, statistics, and symbolic integration.

    \keywords{Gr\"obner bases \and triangular set \and differential algebra \and symbolic computation \and ranking}
\end{abstract}

\section{Introduction}

Let $\mathbb{K}$ be a field of characteristic zero, and $R\coloneqq \left(\mathbb{K}(x),\frac{d}{dx}\right)$ a differential field. We denote by $S_y~\coloneqq~R[y,y',\ldots]\coloneqq R[y^{(\infty)}],$ $\frac{d^j\, y}{dx}=y^{(j)}$, the differential polynomial ring of the differential indeterminate $y$. Elements of that ring are called differential polynomials. Differential ideals are ideals that are closed under taking derivatives. 
\begin{definition}\label{def:def1}
A function $f\coloneqq f(x)$ is called differentially algebraic, or simply D-algebraic, in $S_y$ if there exists a differential polynomial $p\in S_y$ such that $p(f)=0$, i.e., $p(y)\mid_{y=f(x)}=0$. 
\end{definition}
Therefore D-algebraic functions can be seen as zeros of differential polynomials. We will often use this terminology. What we call functions are always elements of some $\KK(x)$-algebra which is closed under the derivation used; they could be analytic functions, formal power series, or distributions in the appropriate domains. They are mainly differentiable objects that are suitable for multiplications by the indeterminates.

\begin{example}[Univariate D-algebraic]\label{ex:ex1} The Weierstrass $\wp\coloneqq\wp(x,g_2,g_3)$ function is D-algebraic since it is a zero of ${y'}^3-4\,y^3+g_2\,y+g_3\in S_y$, with $\mathbb{K}=\mathbb{Q}(g_2,g_3)$. \QEDA
\end{example}
The equation resulting from equating a differential polynomial to zero is called an algebraic differential equation (ADE). For a given differential polynomial, such an equation will be called associated ADE. Similarly, given an ADE, the differential polynomial obtained by removing the equality to zero is called its associated differential polynomial. The order of a differential polynomial is the one of its associated ADE, and its total degree, which we simply call degree, is the total degree of the differential polynomial seen as a multivariate polynomial over $\KK(x)$.

\Cref{def:def1} generalizes to the multivariate case with the partial differential ring $R=\left(\KK(\mathbf{x}),\Delta_n\right)$, $\mathbf{x}=x_1,\ldots,x_n$, and $\Delta_n=\lbrace \partial_{x_i}\coloneqq \frac{\partial}{\partial x_i}, 1\leq i \leq n\rbrace$, $n\in\NN$. In that case, a D-algebraic function is seen as a zero of a partial differential polynomial in $R[y^{(\infty,\overset{n}{\ldots},\infty)}]$, where $(\infty,\overset{n}{\ldots},\infty)$ is a tuple with $n$ components of $\infty$. 
\begin{example}[Multivariate D-algebraic]\label{ex:ex2} The so-called harmonic functions are D-algebraic since they are zeros of the partial differential polynomial
\begin{equation}\label{eq:eq1}
    y^{(2,0,0)} \, + \, y^{(0,2,0)} \, + \, y^{(0,0,2)},
\end{equation}
associated to the Laplace equation $\frac{\partial^2y}{\partial x_1^2} + \frac{\partial^2y}{\partial x_2^2} + \frac{\partial^2y}{\partial x_3^2}=0$.\QEDA
\end{example}

Note that the notion of D-algebraicity used here was already defined in \cite[Chapter IV]{kolchin1973differential}; it relaxes the definitions given in \cite{vdH19,raschel2020counting} as it does not require D-algebraicity in each independent variable. We will manipulate D-algebraic functions as generic zeros \cite[Chapter IV.2]{kolchin1973differential} of the general component of some differential ideal defined from the ADEs under consideration. An essential ingredient for the correctness of our algorithms is the notion of triangular sets, which is discussed in detail in \cite{hubert2003notes}.

The set of D-algebraic functions is perhaps the most general structural class of functions that encompasses most functions encountered in the sciences. One prominent area where they naturally occur is enumerative combinatorics where D-algebraic functions are regarded as generating functions of counting problems. Recently in \cite{raschel2020counting}, Bernardi, Bousquet-M\'{e}lou, and Raschel classified some D-algebraic quadrant models that are not differentially finite (D-finite), i.e., the generating functions of the corresponding counting problems are D-algebraic but their derivatives do not span a finite dimensional vector space (see \cite{stanley1980differentiably,kauers2015walks}). This motivates to explore beyond D-finiteness, a structure that interacts favorably with linear algebra techniques, unlike D-algebraic functions. It is thus important to study the effectiveness of the closure properties of D-algebraic functions. These functions form a field, and are closed under many other operations like composition and differentiation \cite{RSB2023} \cite[Section 6]{raschel2020counting} \cite{moore1896concerning}. The theory behind D-algebraicity began earlier in the last century, see for instance \cite{moore1896concerning,maillet1903series,ritt1950differential}, \cite[Chapter II]{kolchin1973differential}, \cite{denef1984power}, and \cite{lipshitz1986gap}. However, these studies primarily focus on theoretical aspects regarding operations that may or may not preserve D-algebraicity, rather than their practical effectiveness. One had to wait until the end of the last century, when algorithms like the Rosendfeld-Gr\"{o}bner algorithm for representing radical differential ideals appeared \cite{boulier1995representation}. This algorithm is implemented in the \texttt{DifferentialAlgebra} package \cite{DAMaplepackage} in Maple. Of a similar flavor is the Thomas decomposition algorithm which views the generators of radical differential ideals as the ``simple systems'' of a system of algebraic PDEs \cite{bachler2012algorithmic} \cite[Chapter II]{robertz2014formal}. It is available in the \texttt{DifferentialThomas} package of Maple. 

Both algorithms may serve as alternatives to our computations in certain contexts. We also mention the connection between ADEs and biochemical reaction networks which arise from particular types of dynamical systems \cite{hong2020global,dong2023differential}. As we will see later, the theory developed in \cite{hong2020global} is essential to our algorithmic approach. Other interesting topics related to D-algebraic functions are that of differentially definable functions \cite{jimenez2020some}, and solutions to ADEs of fixed degrees \cite{TeguiaDelta2}. We are confident that our results will enable certain progress around the study of these D-algebraic sub-classes.  There is a method for constructing power series that are not D-algebraic, as discussed in \cite{moore1896concerning} and \cite{lipshitz1986gap}. The idea is based on gaps between non-zeros coefficients of the series. One well-known example of such power series is $\sum_{n=0}^{\infty} z^{2^n}$ \cite{maillet1903series}. While these specific series are not the focus of our paper, it is worth noting that a similar strategy, as outlined in \cite{TBguessing}, adapted to higher degree ADEs in conjunction with our findings, may assist in deciding the D-algebraicity of series like $\sum_{n=0}^{\infty}z^{n^3}$  \cite{lipshitz1986gap,lipshRubel1991}.

To our knowledge, this paper and the work in \cite{RSB2023} are the only ones that are entirely dedicated to effective computations of the closure properties of D-algebraic functions. With regards to power series solutions, effective results appeared in \cite{vdH19}. Like in \cite{RSB2023}, compared to that work of van der Hoeven, which for the zero-equivalence problem requires investigations on the initial conditions, in our case we focus on what he called ``global computations'', which essentially rely on the framework of differential algebra and computational commutative algebra. The reason is that instead of dealing with a particular solution of an ADE, we deal with \textit{all} solutions of that ADE for which the arithmetic operation performed is meaningful. Thus zero-testing and initial conditions are neglected, though our results may apply to them.

This paper begins with the arithmetic of univariate D-algebraic functions. Based on the construction of systems of ordinary differential equations (ODEs), we show how to compute an ADE of order at most the sum of the orders of \textit{any} given ADEs, satisfied by some rational expression of the solutions of those given ADEs. In \Cref{sec:multivar}, we develop an algorithm for the arithmetic of multivariate D-algebraic functions and prove that the order of the resulting algebraic PDEs is not always bounded by the sum of the orders of the given algebraic PDEs. \Cref{sec:disc} presents applications using our Maple implementation of \Cref{algo:Algo1} and \Cref{algo:Algo2}. We mention that all our results also work for differential rational expressions of D-algebraic functions due to the natural action of differentiation for D-algebraic functions.

\subsection*{Some background and terminology}

Let us now assume the univariate setting. To define our target ODE system, we use the multivariate differential polynomial ring $S_{\mathbf{y},z}$, where $\mathbf{y}=(y_1,\ldots,y_n)$ for some $n\in\mathbb{N}$. Our main interest is in systems of the form
\begin{equation*}
    \begin{cases}
     \mathbf{y}' = \mathbf{A}(\mathbf{y})\\
     z = B(\mathbf{y})
    \end{cases}, \,\quad\, \left(\mathcal{M}\right)
\end{equation*}
where $\mathbf{A}=\left(A_1,\ldots,A_n\right)\in\KK(x)(y_1,\ldots,y_n)^n,$ $B\in\KK(x)(y_1,\ldots,y_n)$. In the context of differential algebra, we consider the following $n+1$ polynomials
\begin{equation}\label{eq:eq2}
    Q\mathbf{y}'-\mathbf{a}(\mathbf{y}),\, Qz-b(\mathbf{y});
\end{equation}
where $Q$ is the least common multiple of the denominators in $(\mathcal{M})$, and $A_i=a_i/Q,$ $i=1,\ldots,n, B=b/Q$. With an appropriate monomial ordering over the differential polynomial ring $S_{\mathbf{y},z}$, multiplying by $Q$ in \eqref{eq:eq2} is equivalent to clearing denominators and considering $Q$ as the least common multiple of the initials of the resulting differential polynomials. We make the above choice for convenience as it allows us to defer the selection of a monomial ordering to a later stage, which also appears more suitable for the algorithmic part (see also \cite{dong2023differential}). Next, we consider the differential ideal
\begin{equation}\label{eq:eq3}
    I_{\mathcal{M}}\coloneqq \langle Q\mathbf{y}'-\mathbf{a}(y),\, Qz-b(\mathbf{y}) \rangle \colon Q^{\infty} \subset R[\mathbf{y}^{(\infty)},z^{(\infty)}]=:S_{\mathbf{y},z},
\end{equation}
where $:Q^{\infty}$ denotes the saturation with $\{Q\}$. We define $I_{\mathcal{M}}^{\leq j}$ as the algebraic ideal containing the polynomials in \eqref{eq:eq2} and their first derivatives of order at most $j$ saturated with $\{Q\}$. The system $(\mathcal{M})$ may be seen as a \textit{single-output state-space} model without input. General setting and computations of such models are discussed in \cite{hong2020global,Glebnotes}. There it is also explained how so-called input-output equations of $(\mathcal{M})$, i.e. ADEs that relate the output variable $z$ and the input variable (usually denoted $u$), are deduced from the elimination ideal 
$$I_{\mathcal{M}}^{\leq n} \cap \KK[x][z^{(\infty)}].$$
Hence in our setting, the input-output equations correspond to ADEs satisfied by the rational function $B(\mathbf{y})$.
\begin{remark} Recall that the separant of a differential polynomial is its derivative with respect to its highest order term. Thus for all positive integers $j$, the separant of $(Q\,y_i'-a_i(\mathbf{y}))^{(j)}$ is $Q$, for $i=1,\ldots,n$. Therefore saturation by $Q$ suffices to target the generic solutions of $\left(\mathcal{M}\right)$.
\end{remark}
A system of the type of $(\mathcal{M})$ is called rational dynamical system of dimension $n$ (the dimension of $\mathbf{y}$). In their work \cite{pavlov2022realizing}, Pavlov and Pogudin developed an algorithmic approach to derive such systems from first-order ADEs of two dependent variables, which represent the output and the input. The relationship between these systems and the closure properties of D-algebraic functions is discussed in \cite{RSB2023}. Specifically, when the dynamical system $(\mathcal{M})$ models ADEs fulfilled by some D-algebraic functions $f_1,f_2,\ldots,f_N$, such that $B$ in $(\mathcal{M})$ represents a rational expression in the dependent and independent variables, the corresponding input-output equations are ADEs satisfied by $B(f_1,\ldots,f_N)$. In particular, thanks to this construction and the non-trivial nature of the elimination ideal that generates the input-output equations of $(\mathcal{M})$, we can reestablish that the set of D-algebraic functions is a field.
\begin{definition}
A differential polynomial $p\in R[y^{(\infty)}]$ of order $n$ is called {\em linear in its highest order term} (l.h.o.) if, in the expression of $p$, $y^{(n)}$ appears only linearly.
\end{definition}
\begin{example}
The differential polynomials $y''-{y'}^3+xy+7x$ and ${y''}{y'}^5y+3y^2$ are l.h.o., whereas $q={y^{(4)}}^3+xy'+1$ is not l.h.o. But its derivative $q'=3y^{(5)}{y^{(4)}}^2+y+xy''$ is l.h.o. In general, the derivative of a differential polynomial is always l.h.o.\QEDA
\end{example}

We will say that a computed algebraic (partial or ordinary) differential equation is of the least order if its order is the minimal order differential polynomial in the corresponding differential ideal. The arithmetic of two D-algebraic functions in the l.h.o. and non-l.h.o. cases were treated in \cite{RSB2023}. In this paper, we improve the computational bound given in that paper for non-l.h.o. algebraic ODEs. We demonstrate that the bound established for l.h.o. ADEs also applies to non-l.h.o. ADEs. In essence, the sum of the orders of some given ADEs serves as an upper bound for the minimal order of an ADE satisfied by rational expressions formed from generic solutions of those ADEs.

\section{Modeling the arithmetic of univariate D-algebraic functions}

Throughout this section, we consider $N+1$ D-algebraic functions $f_1,\ldots,f_N$, and $f=r(f_1,\ldots,f_N)$, where $r\in\KK(x)(v_1,\ldots,v_N)$. We assume that $f_i$ is a zero of the differential polynomial $p_i\in S_{y_i}$, $i=1,\ldots,N\in\NN$ of order $n_i\in\NN$. The main goal of this section is to complement the work in \cite{RSB2023}. As we will see in \Cref{eg:eg1}, using the implementation of \Cref{algo:Algo1}, one can compute least-order ADEs in all situations.

\subsection{The l.h.o. case}\label{sec:lho}

We here recall the result from \cite{RSB2023} for the arithmetic of D-algebraic functions that are zeros of l.h.o. differential polynomials. Since each $p_i$ is l.h.o, we can write its highest derivative of the dependent variable in terms of the lower ones from the associated ADE. We write
\begin{equation}\label{eq:eq4}
    y_i^{(n_i)} = r_{p_i}(y_i,y_i',\ldots,y_i^{(n_i-1)}).
\end{equation}
Let us now construct the ODE system (here a dynamical system) that models the operation $r(f_1,\ldots,f_N)=:f$.

Define the indeterminates
\begin{equation}\label{eq:eq5}
\begin{split}
    &v_1=y_1,v_2=y_1',\ldots, v_{n_1+1}=y_1^{(n_1)},\\
    &v_{n_1+2}=y_2,v_{n_1+3}=y_2',\ldots,v_{n_1+n_2+2}=y_2^{(n_2)},\\
    &\vdots\\
    &v_{N+\sum_{i=1}^{N}n_i}=y_N^{(n_N)}.
\end{split}
\end{equation}
This implies that
\begin{equation}\label{eq:eq6}
    v_j'=v_{j+1}, \forall j\in \left[1\,,\, N+\sum_{i=1}^{N}n_i\right]\cap \NN \setminus \left\lbrace k+\sum_{i=1}^k n_i, k=1,\ldots,N \right\rbrace.
\end{equation}
Furthermore, let
\begin{equation}\label{eq:eq7}
\begin{split}
    &w_j=v_j,\,\, \text{ for }\,\, 0< j \leq n_1\\
    &w_j=v_{j+1},\,\, \text{ for }\,\, n_1< j \leq n_1+n_2\\
    &\vdots\\
    &w_j=v_{j+N},\,\, \text{ for }\,\, \sum_{i=1}^{N-1}n_i< j \leq \sum_{i=1}^{N}n_i.
\end{split}
\end{equation}
Set $M\coloneqq \sum_{i=1}^{N}n_i$. The following $M$ dimensional dynamical system models $f$:
\begin{equation*}
\begin{cases}
    &w_1'=w_2,\,\ldots,w_{n_1}'=r_{p_1}(w_1,\ldots,w_{n_1})\coloneqq A_1(w_1,\ldots,w_M)\\
    &w_{n_1+1}'=w_{n_1+2},\,\ldots,w_{n_1+n_2}'=r_{p_2}(w_{n_1+1},\ldots,w_{n_1+n_2})\coloneqq A_2(w_1,\ldots,w_M)\\
    &\vdots\\
    &w_{M}'\coloneqq A_N(w_1,\ldots,w_M)\\
    &z = r(w_1,w_{n_1+1},\ldots,w_{\sum_{i=1}^{N-1}n_i+1})
\end{cases}. \,\quad\,\quad\, (\mathcal{M}_f)
\end{equation*}
We denote the corresponding differential ideal by $I_{\mathcal{M}_f}\in S_{\mathbf{w},z}$. The above construction sustains the following proposition, which extends \cite[Proposition 4.9.]{RSB2023} for arbitrarily many differential polynomials.
\begin{proposition}\label{prop:prop1} We use the above notation. There is an algorithm that computes an ADE of order at most $M\coloneqq n_1+n_2+\ldots+n_N$, fulfilled by the D-algebraic function $f\coloneqq r(f_1,\ldots,f_N)$.
\end{proposition}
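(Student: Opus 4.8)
The plan is to realize $f = r(f_1,\ldots,f_N)$ as the output variable $z$ of the dynamical system $(\mathcal{M}_f)$ and then extract the desired ADE from an elimination ideal, exactly in the spirit of the input-output equation machinery recalled in the background section. First I would verify that $(\mathcal{M}_f)$ genuinely models $f$: by construction the $w_j$ are bundled copies of the jets $y_i, y_i', \ldots, y_i^{(n_i-1)}$ of the individual solutions, the transition equations $w_j' = w_{j+1}$ propagate differentiation within each block, and the l.h.o. substitution \eqref{eq:eq4} closes each block by expressing $y_i^{(n_i)}$ as $A_i$, a rational function of the lower jets. Hence substituting $w_j = f_i^{(\cdot)}$ gives a consistent solution of the $\mathbf{w}' = \mathbf{A}(\mathbf{w})$ part, and the output equation $z = r(w_1, w_{n_1+1}, \ldots)$ evaluates to $z = r(f_1,\ldots,f_N) = f$. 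So $f$ is the output of a rational dynamical system of dimension $M = \sum_i n_i$.

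Next I would invoke the state-space elimination theory: clearing denominators as in \eqref{eq:eq2} and forming the saturated differential ideal $I_{\mathcal{M}_f} = \langle Q\mathbf{w}' - \mathbf{a}(\mathbf{w}),\, Qz - b(\mathbf{w}) \rangle : Q^\infty$ as in \eqref{eq:eq3}, the input-output equation for $z$ is obtained from the elimination ideal $I_{\mathcal{M}_f}^{\leq M} \cap \KK[x][z^{(\infty)}]$. The saturation by $Q$ is exactly what the remark after \eqref{eq:eq3} justifies, since the separant of each propagated relation is $Q$, so we stay on the generic component. The key quantitative point is the order bound: because the state has dimension $M$, differentiating the output relation $z = B(\mathbf{w})$ a total of $M$ times produces the relations indexed by $j \leq M$ in $I_{\mathcal{M}_f}^{\leq M}$, which give $M+1$ expressions $z, z', \ldots, z^{(M)}$ as rational functions in the $M$ state coordinates $w_1, \ldots, w_M$. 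This is $M+1$ quantities living in an $M$-dimensional rational function field, so they are algebraically dependent over $\KK(x)$, and that dependency, after clearing $Q$, is a nonzero differential polynomial in $z$ alone of order at most $M$.

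For the algorithmic claim I would make the elimination explicit: run a Gröbner basis computation with respect to an elimination order that eliminates the $w_j$ (equivalently a triangular-set / Rosenfeld–Gröbner style decomposition as referenced via \cite{hubert2003notes,boulier1995representation}) on the generators of $I_{\mathcal{M}_f}^{\leq M}$, then intersect with $\KK[x][z^{(\infty)}]$; the first nonzero generator in $z^{(\infty)}$ is the sought ADE, and its order is at most $M$ by the dimension count. The main obstacle I anticipate is not the order bound itself—that follows cleanly from the transcendence-degree argument—but establishing that the elimination ideal is genuinely nonzero, i.e., that the algebraic dependency does not degenerate into the trivial relation after saturation. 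This is precisely the point where I would lean on the nontriviality of the input-output ideal for single-output state-space models established in \cite{hong2020global}, together with the saturation by $Q$ ensuring we compute over the generic component rather than a spurious lower-dimensional one; making this nondegeneracy rigorous for an arbitrary rational $r$ and arbitrary l.h.o. inputs is the delicate step.
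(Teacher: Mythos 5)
Your proposal is correct and follows essentially the same route as the paper: build the dynamical system $(\mathcal{M}_f)$, pass to the saturated ideal, and extract the ADE from the elimination ideal $I_{\mathcal{M}_f}^{\leq M} \cap \KK[x][z^{(\infty)}]$ via a Gr\"obner basis. The only difference is that where you spell out the transcendence-degree count ($M+1$ derivatives of $z$ living in the $M$-dimensional field $\KK(x)(w_1,\ldots,w_M)$), the paper simply cites \cite[Corollary 3.21]{hong2020global} and \cite[Prop. 1.27]{Glebnotes} for the non-triviality and least-order claims---your argument is exactly the content of those references, and is the one the paper itself makes explicit later in \Cref{lem:lem1}.
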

\begin{proof} Considering a lexicographic monomial ordering that rank the differential indeterminate $z$ higher than the other indeterminates, the proof relies on the fact that the elimination ideal
\begin{equation}\label{eq:eq8}
    I_{\mathcal{M}_f}^{\leq M} \cap \KK[x][z^{(\infty)}] \subset \KK[x,z^{\leq M}]
\end{equation}
is not trivial, and contains the least-order differential polynomial that has $f=r(f_1,\ldots,f_N)$ as a zero (see \cite[Corollary 3.21]{hong2020global} and \cite[Prop. 1.27]{Glebnotes}). Computationally, among the generators (Gr\"obner basis) of the elimination ideal, we select a differential polynomial of the lowest degree among those of the smallest order.\CQFD
\end{proof} 

\subsection{The non-l.h.o. case}

Suppose there are $N_1\in\NN\setminus \{0\}$  differential polynomials that are non-l.h.o. among the $p_i$'s, $N_1\leq N$. In \cite{RSB2023} this situation is overcome by replacing those differential polynomials with their first derivatives. This brings the problem back to the l.h.o. case, but increase the bound for the order of the differential equation sought by $N_1$. The main issue is that without taking derivatives, we do not easily relate the arithmetic in this case to a rational dynamical system.

Our main observation is that, provided some algorithmic changes and a little theoretical adaptation, the arguments over which the computations from the dynamical systems rely are still valid. Indeed, suppose that $p_j$, $1\leq j \leq N$ is a non-l.h.o. differential polynomial, and let $m_j$ be the degree of $y_j^{(n_j)}$ in $p_j$. We can rewrite $p_j$ as
\begin{equation}\label{eq:eqpjmj}
    c_{m_j}\,{y_j^{(n_j)}}^{m_j} + p_{j,1}(x,y_j^{(\leq n_j)}),
\end{equation}
where $c_{m_j}\in L\coloneqq \KK[x,y_j^{(\leq n-1)}]$, $p_{j,1}\in L[y_j^{(n_j)}]$ such that $\deg_{y_j^{(n_j)}}(p_{j,1})< m_j$. The coefficient $c_{m_j}$ is often called the initial of $p_j$. We make the same change of differential indeterminates as in \Cref{sec:lho} and build an ODE system of the form
\begin{equation*}
\begin{cases}
    &w_1'=w_2,\,\ldots,{w_{n_1}'}^{m_1}=A_1(w_1,\ldots,w_M,w_{n_1}')\\
    &w_{n_1+1}'=w_{n_1+2},\,\ldots,{w_{n_1+n_2}'}^{m_2}=A_2(w_1,\ldots,w_M,w_{n_1+n_2}')\\
    &\vdots\\
    &{w_{M}'}^{m_N}= A_N(w_1,\ldots,w_M,w_M')\\
    &z = r(w_1,w_{n_1+1},\ldots,w_{\sum_{i=1}^{N-1}n_i+1})
\end{cases}, \quad\,\,\quad\,\, (\mathcal{M}_f^*)
\end{equation*}
where $m_j=\deg_{y_j^{(n_j)}}(p_j),j=1,\ldots,N$. The rational functions $A_j,j=1,\ldots,N$ result from solving $p_j=0$ for ${y_j^{(n_j)}}^{m_j}$, i.e., with the same variable names in the expressions of $A_j$ and $p_j$, we can write
$$
A_j = - \frac{p_{j,1}}{c_{m_j}}.
$$
Observe that the denominators are free of the indeterminate representing $y_j^{(n_j)}$. This allows us to define $Q$ in $(\mathcal{M}_f*)$ as in $(\mathcal{M})$ by using all the denominators of the system. To  ease the theoretical understanding, let us simplify the notation by rewriting $(\mathcal{M}_f^*)$ in the following abstract form
\begin{equation}\label{eq:radratdyn}
\begin{cases}
    {\mathbf{w}'}^{\mu} = \mathbf{\mathcal{A}}(\mathbf{w}) + \mathbf{\mathcal{E}}_{\mathbf{\mathcal{A}}}(\mathbf{w}')\\
    z = B(\mathbf{w})
\end{cases}\coloneqq
\begin{cases}
    &{w_1'}^{\mu_1}=\mathcal{A}_1(w_1,\ldots,w_M) + \mathcal{E}_{\mathcal{A}_1}(w_1')\\
    &\vdots\\
    &{w_{M}'}^{\mu_M}=\mathcal{A}_M(w_1,\ldots,w_M)+\mathcal{E}_{\mathcal{A}_M}(w_M')\\
    &z = B(w_1,\ldots,w_M)
\end{cases},
\end{equation}
where 
\begin{itemize}
    \item $\mu_i,i=1,\ldots,M$, is either $1$ or one of the $m_j, j=1,\ldots,N$ in $(\mathcal{M}_f^*)$; 
    \item $\mathcal{A}_i$ (with numerator $a_i$) is either $w_{i+1}$ or the part of $A_j$, that is free of $w_j'$ in $(\mathcal{M}_f^*)$, $j=1,\ldots,N$;
    \item $\mathcal{E}_{\mathcal{A}_i}(w_i')$ (with numerator $e_{a_i}$) is either $0$ or the part of $A_j$ that contains $w_j'$ (in its numerator) in $(\mathcal{M}_f^*)$, $j=1,\ldots,N$;
    \item $B(w_1,\ldots,w_M)=r(w_1,w_{n_1+1},\ldots,w_{\sum_{i=1}^{N-1}n_i+1})$ with numerator $b$.
\end{itemize}
We call a system of the form of \eqref{eq:radratdyn} a \textit{radical-rational dynamical system}. Note that the shape of $(\mathcal{M}_f^*)$ imposes certain constraints on the radical-rational dynamical systems we use. In the context of differential algebra, we consider the differential ideal
\begin{equation}\label{eq:eq9}
    I_{\mathcal{M}_f^*}\coloneqq \langle Q{\mathbf{w}'}^{\mu}-\mathbf{a}(\mathbf{w}) - \mathbf{e}_{\mathbf{a}}(\mathbf{w}'),\, Q\,z-b(\mathbf{w}) \rangle \colon H^{\infty} \subset R[\mathbf{y}^{(\infty)},z^{(\infty)}]=:S_{\mathbf{w},z},
\end{equation}
where $H$ is the least common multiple of $Q$ and the separants of $Q{\mathbf{w}'}^{\mu}-\mathbf{a}(\mathbf{w}) - \mathbf{e}_{\mathbf{a}}(\mathbf{w}')$. These separants are explicitely given by 
$$\mu Q {\mathbf{w}'}^{\mu - 1} - \frac{\partial}{\partial \mathbf{w}'}\mathbf{e}_{\mathbf{a}}(\mathbf{w}'),$$
where derivations are taken component wise.

Let us now present our algorithmic steps for computing ADEs satisfied by rational expressions of univariate D-algebraic functions.

\begin{algorithm}[ht]\caption{Arithmetic of univariate D-algebraic functions }\label{algo:Algo1}
    \begin{algorithmic} 
    \\ \Require $N$ ADEs associated to the differential polynomials $p_j$ of order $n_j$ and dependent variable $y_j(x)$, $j=1,\ldots,N$; and a rational function $r \in\KK(x)(v_1,\ldots,v_n)$.
    \Ensure An ADE of order at most $M\coloneqq n_1+\cdots+n_N$ fulfilled by the D-algebraic function $f=$~$r(f_1,\ldots,f_N)$, where each $f_j$ is a zero of the generic component of $p_j$.
    \begin{enumerate}
    \item Construct $(\mathcal{M}_f^*)$ from the input ADEs as in \eqref{eq:radratdyn}.
    \item Denote by $E$ the set of polynomials 
        \begin{eqnarray*}
             E&\coloneqq &\lbrace Q\,{\mathbf{w}'}^{\mu}-\mathbf{a}(\mathbf{w})-\mathbf{e}_{\mathbf{a}}(\mathbf{w}'), Q \, z - b(\mathbf{w})\rbrace\\
               &=& \lbrace Q\,{w_i'}^{\mu_i}-a_i(w_1,\ldots,w_M) - e_{a_i}(w_i'),i=1,\ldots,M,\, Q\,z - b(w_1,\ldots,w_M)\rbrace 
        \end{eqnarray*}
        in the polynomial ring $\KK[x,w_1,\ldots,w_M,w_1',\ldots,w_M',z]$.
        \item Compute the first $M-1$ derivatives (w.r.t. $d/dx$) of all polynomials in $E$ and add them to $E$.
        \item Compute the $M$th derivative of $Q\, z - b(w_1,\ldots,w_M)$ and add it to $E$. We are now in the ring $\KK[x,w_1^{(\leq M)},\ldots,w_M^{(\leq M)},z^{(\leq M)}]$.
        \item Let  $I\coloneqq\langle E \rangle \subset \KK[x,w_1^{(\leq M)},\ldots,w_M^{(\leq M)},z^{(\leq M)}]$ be the ideal generated by the elements of~$E$.
        \item If some of the input are not l.h.o., let $H$ be the least common multiple of $Q$ and the separants of $Q{\mathbf{w}'}^{\mu}-\mathbf{a}(\mathbf{w}) - \mathbf{e}_{\mathbf{a}}(\mathbf{w}')$. Otherwise $H\coloneqq Q$.
        \item Update $I$ by its saturation with $H$, i.e, $I\coloneqq I\colon H^{\infty}$.
        \item\label{step:elimalg1} Compute the elimination ideal $I \cap \KK[x][z^{(\leq M)}]$. From the Gr\"{o}bner basis, choose a polynomial $q$ of the lowest degree among those of the lowest order. 
        \item Return $q=0$ (or its writing with $d/dx$).
    \end{enumerate}	
    \end{algorithmic}
\end{algorithm}

To prove the correctness of \Cref{algo:Algo1}, we must show that the elimination ideal $I_{\mathcal{M}_f^*}^{\leq M} \cap \KK[x][z^{(\infty)}]$ in \cref{step:elimalg1} is non-trivial. This fact is established by the following theorem.

\begin{theorem}\label{lem:lem1} On the commutative ring $\KK(x)[\mathbf{w}^{(\infty)},z^{(\infty)}]$, seen as a polynomial ring in infinitely many variables, consider the lexicographic monomial ordering corresponding to any ordering on the variables such that
\begin{itemize}
\item[(i.)] ${z}^{(j_1)}\succ {w_i}^{(j_2)}$ for all $i, j_1, j_2\in \NN$,
\item[(ii.)] $z^{(j+1)}\succ z^{(j)}$ and $w_{i_1}^{(j+1)}\succ w_{i_2}^{(j)}$ for all $i_1, i_2,j\in\NN$.
\end{itemize}
Then the set $E\coloneqq\left\lbrace Q{\mathbf{w}'}^{\mu}-\mathbf{a}(\mathbf{w})-\mathbf{e}_{\mathbf{a}}(\mathbf{w}'),\, Qz-b(\mathbf{y}) \right\rbrace$ is a triangular set with respect to this ordering. Moreover,
\begin{equation}\label{eq:eq10}
    I_{\mathcal{M}_f^*}^{\leq M} \cap \KK[x][z^{(\infty)}] \neq \langle 0 \rangle.
\end{equation}
\end{theorem}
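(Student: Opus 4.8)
The plan is to split the statement into its two assertions: that $E$ is a triangular set, which is a combinatorial check on leaders under the prescribed ranking, and that the elimination ideal is non-trivial, which I would obtain from a transcendence-degree count made effective by the separant saturation. The first part is essentially bookkeeping; the second is where the real content lies, and it is the natural analogue, in the radical-rational setting, of the l.h.o. argument already invoked in \Cref{prop:prop1}.

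For the triangularity, I would read off the leader of each generator. Since the initials $c_{m_j}$ are free of the highest-order derivatives, both $Q$ and every numerator $a_i$ lie in $\KK[x,\mathbf{w}]$, while $e_{a_i}$ involves only $w_i'$; hence the unique order-one variable occurring in $Q\,{w_i'}^{\mu_i}-a_i(\mathbf{w})-e_{a_i}(w_i')$ is $w_i'$, which by condition (ii.) is its leader. The output polynomial $Q\,z-b(\mathbf{w})$ contains $z$, which by condition (i.) outranks every $w$-variable, so its leader is $z$. The leaders $w_1',\ldots,w_M',z$ are pairwise distinct, and in fact no leader of one generator occurs in any other (no $w_i'$ appears in the $j$-th state equation for $j\neq i$, and $z$ appears only in the output equation), so $E$ is a triangular (indeed autoreduced) set.

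For the non-triviality, the structural fact I would establish is that every derivative $\mathbf{w}^{(k)}$ is algebraic over $\KK(x)(\mathbf{w})$. Each relation ${w_i'}^{\mu_i}=\mathcal{A}_i(\mathbf{w})+\mathcal{E}_{\mathcal{A}_i}(w_i')$ makes $w_i'$ algebraic of degree $\mu_i$ over $\KK(x)(\mathbf{w})$; differentiating and isolating $w_i''$ is legitimate precisely because the coefficient that arises is the separant $\mu_i Q\,{w_i'}^{\mu_i-1}-\partial_{w_i'}e_{a_i}$, which is a unit in the localization by $H$, so $w_i''$, and inductively $w_i^{(k)}$, is rational in $\mathbf{w},\mathbf{w}'$ and therefore algebraic over $\KK(x)(\mathbf{w})$. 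Because $z=B(\mathbf{w})$, all iterated derivatives $z^{(k)}$ likewise lie in the algebraic closure of $\KK(x)(\mathbf{w})$, a field of transcendence degree $M$ over $\KK(x)$ on the generic component. The $M+1$ quantities $z,z',\ldots,z^{(M)}$ then cannot be algebraically independent over $\KK(x)$, so some nonzero $P(x,z,\ldots,z^{(M)})$ vanishes on the generic solution.

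The remaining step — and the one I expect to be the main obstacle — is to certify that such a $P$ genuinely lies in $I_{\mathcal{M}_f^*}^{\leq M}$, that is, that differentiating the generators only up to order $M$ and saturating by $H$ already realizes the relation in the elimination ideal, rather than its being lost to $H$-torsion or requiring higher derivatives. Here the triangular structure feeds directly into the cited elimination results for single-output state-space models (\cite[Corollary 3.21]{hong2020global}, \cite[Prop.~1.27]{Glebnotes}): the derivatives of the state equations up to order $M-1$ express $\mathbf{w}'',\ldots,\mathbf{w}^{(M)}$, over the localization by $H$, in terms of $\mathbf{w},\mathbf{w}'$, and the derivatives of $Q\,z-b$ up to order $M$ express $z,\ldots,z^{(M)}$; eliminating the $w$-variables from these finitely many relations yields $P$. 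The delicate point is that the radical (non-l.h.o.) nature replaces the rational elimination of the l.h.o. case with elimination over an algebraic extension, so I would need the separant saturation to guarantee both that the division steps are valid and that the resulting relation in $z,\ldots,z^{(M)}$ is not identically zero.
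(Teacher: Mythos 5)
Your proposal is correct and follows essentially the same route as the paper: verify that the distinct leaders $w_i'$ and $z$ make $E$ a triangular set, then derive non-triviality of the elimination ideal from the transcendence-degree count ($M$ for the quotient by the saturated truncated ideal versus the $M+1$ variables $z,\ldots,z^{(M)}$), with the separant saturation guaranteeing that higher derivatives of $\mathbf{w}$ stay algebraic over $\KK(x)(\mathbf{w})$. The ``remaining obstacle'' you flag at the end --- certifying that the dependency is realized already in $I_{\mathcal{M}_f^*}^{\leq M}$ --- is exactly what the paper settles by applying Hubert's Theorem 4.4 to the triangular set formed by $E$ and its derivatives, which shows that all associated primes of the truncated saturated ideal share the transcendence basis $\{w_1,\ldots,w_M\}$, so the dimension count applies directly to the truncated ideal and no separate descent argument is needed.
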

\begin{proof} The leading monomials of $\left(Q\,{w_i'}^{\mu_i}-a_i(\mathbf{w})-e_{a_i}(w_i')\right)^{(j)}$ and $\left(Q\,z-b(\mathbf{w})\right)^{(j)}$ in the ring $\KK(x)[\mathbf{w}^{(\infty)},z^{(\infty)}]$ (with derivation seen in $S_{\mathbf{w},z}$) have highest variables $w_i^{(j+1)}$ and $z^{(j)}$, respectively. Since these variables are all distinct, by definition (see \cite[Definition 4.1]{hubert2003notes}), we deduce that $E$ is a consistent triangular set. We shall see the coefficients in the field $\KK(x)$. As a triangular set, $E$ defines the ideal $\langle E\rangle:H^{\infty}=I_{\mathcal{M}_f^*}$. Therefore by \cite[Theorem 4.4]{hubert2003notes}) all associated primes of $I_{\mathcal{M}^*}^{\leq M}$ share the same transcendence basis given by the non-leading variables $\{w_1,\ldots,w_M\}$ in $I_{\mathcal{M}_f^*}^{\leq M}$. Thus the transcendence degree of $\KK(x)[\mathbf{w}^{(\leq M)},z^{(\leq M)}]/I_{\mathcal{M}_f^*}^{\leq M}$ over $\KK(x)$ is $M$. However, the transcendence degree of $\KK(x)[z^{(\leq M)}]$ is $M+1$. Hence we must have $I_{\mathcal{M}_f^*}^{\leq M} \cap \KK[x][z^{(\infty)}] \neq \langle 0 \rangle$. \CQFD
\end{proof}
From the proof of \Cref{lem:lem1}, one should observe that $M$ is the minimal integer for which the used arguments hold. This relates to the minimality of the order of the ADE obtained after the computations. Moreover, the l.h.o and the non-l.h.o cases are unified in \eqref{eq:radratdyn}. We make $\mathbf{\mathcal{E}}$ explicit to illustrate the difference with the construction of $(\mathcal{M}_f)$ in \Cref{sec:lho}.

 The main point behind \Cref{algo:Algo1} is the systematic construction of ADEs for rational expressions of D-algebraic functions using ODE systems (or triangular sets). Note that the saturation with respect to the least common multiple of the denominators and the separants implies that the D-algebraic functions we deal with are the generic solutions of ADEs.

 \subsection{Implementation}

We have implemented \Cref{algo:Algo1} in the \texttt{NLDE} package (see \cite{BTNLDE,NLDE}). It complements the implementation for arithmetic operations from \cite{RSB2023}. For full details on the syntax, we refer the reader to the documentation at \href{https://mathrepo.mis.mpg.de/OperationsForDAlgebraicFunctions/NLDEdoc.html}{MathRepo NLDE Doc} (see also \cite{NLDE}). We here present examples to illustrate what can be seen as an improvement of the implementation in \cite{RSB2023}. The commands for arithmetic operations are \texttt{NLDE:-arithmeticDalg} and \texttt{NLDE:-unaryDalg}. The former is for several input ADEs, and the latter is for a single input ADE. By default, \texttt{NLDE:-arithmeticDalg} follows the second method in \cite{RSB2023}, which implies that in the non-l.h.o. case, the order of the returned ADE may be higher than expected. When some of the input ADEs are not l.h.o., one specifies \texttt{lho=false} to use \Cref{algo:Algo1} and obtain an ADE of the smallest order possible. Regarding Gr\"obner bases elimination, by default, 
\begin{itemize}
    \item when \texttt{lho=true}, we use the \textit{pure lexicographic} ordering \texttt{plex} from the Maple package \texttt{Groebner}; this is the same ordering used in \Cref{lem:lem1}, which guarantees the desired result.
    \item When \texttt{lho=false}, we use the \textit{lexdeg} elimination order of Bayer and Stillman \cite{bayer1987theorem} implemented by the \texttt{EliminationIdeal} command of the  Maple package \texttt{PolynomialIdeals}. This is a block order where blocks are ordered with \texttt{plex}. In our case, there are two blocks: the variables of the elimination ideal constitute one block, and the remaining variables constitute the other one. Both blocks are internally ordered with a suitable order like \texttt{degrevlex} for instance. Although this ordering generally returns the desired result, it can also fail to find it sometimes. Nevertheless, \textit{lexdeg} tends to provide a better efficiency. To use \texttt{plex} when \texttt{lho=false}, one further specifies \texttt{lhoplex=true}.
\end{itemize}
To use the \textit{lexdeg} ordering when \texttt{lho=true}, one specifies \texttt{ordering=lexdeg}. For unary operations, \texttt{NLDE:-unaryDalg} was updated to only implement the method of this paper.
\begin{example}\label{eg:eg1} Consider the ADE 
\begin{equation}\label{eq:ade1}
    {y_1'(x)}^2 + {y_1(x)}^2=1.
\end{equation}
Its solutions are $-1, 1,$ and $\lambda_1\,\cos(x) + \lambda_2 \sin(x)$ for arbitrary $\lambda_1,\lambda_2\in\KK$ such that $\lambda_1^2+\lambda_2^2=1$. As a second ADE, we take the ODE of the exponential function
\begin{equation}\label{eq:ade2}
y_2'(x)=y_2(x).
\end{equation}
We want to compute ADEs for the sum of solutions of these two ADEs. Using Method II from \cite{RSB2023}, one gets the output:
\begin{lstlisting}
> ADE1:=diff(y[1](x),x)^2+y[1](x)^2-1=0:
> ADE2:=diff(y[2](x),x)=y[2](x):
> Out1:=NLDE:-arithmeticDalg([ADE1,ADE2],[y[1](x),y[2](x)],z=y[1]+y[2])
\end{lstlisting}
\begin{small}
\begin{dmath}\label{eq:out1}
\mathit{Out1} \coloneqq \frac{d^{3}}{d x^{3}}z \! \left(x \right)-\frac{d^{2}}{d x^{2}}z \! \left(x \right)+\frac{d}{d x}z \! \left(x \right)-z \! \left(x \right)=0.
\end{dmath}
\end{small}
Since \eqref{eq:ade1} is not l.h.o., its first derivative is used; which explains why the output ADE is of order $2+1=3$. Let us now use the method of this paper. The corresponding ODE system is
\begin{equation}\label{eq:odesys1}
\begin{cases}
     {w_1'}^2 = - {w_1}^2 - 1\\
     w_2'     = w_2\\
     z        = w_1 + w_2
\end{cases}.
\end{equation}
Thus $\mu=(2,1)^T, \mathbf{\mathcal{E}}=(0,0)^T$.
We use our package as follows:
\begin{lstlisting}
> Out2:=NLDE:-arithmeticDalg([ADE1,ADE2],[y[1](x),y[2](x)],z=y[1]+y[2],lho=false):
\end{lstlisting}
\vspace{-0.5cm}

\begin{dmath}\label{eq:out2}
\left(\frac{d^{2}}{d x^{2}}z \! \left(x \right)\right)^{2}-2 \left(\frac{d}{d x}z \! \left(x \right)\right) \left(\frac{d^{2}}{d x^{2}}z \! \left(x \right)\right)
+2 \left(\frac{d}{d x}z \! \left(x \right)\right)^{2}-2 z \! \left(x \right) \left(\frac{d}{d x}z \! \left(x \right)\right)+z \! \left(x \right)^{2}-2=0.
\end{dmath}
We obtain a second-order ADE satisfied by sums of all generic solutions to \eqref{eq:ade1} and \eqref{eq:ade2}. Notice that $1+\exp(x)$ is not a solution of \eqref{eq:out1} and \eqref{eq:out2}. As explained in \cite{RSB2023}, in general, the saturation often neglects polynomial solutions of degrees less than the order of non-l.h.o. ADEs; this applies to \eqref{eq:out1}. Similarly, for \eqref{eq:out2}, saturation at the separants eliminates those same polynomials.

On the other hand, our implementation provides an option for the user to avoid saturation by the separants. The optional argument is \texttt{separantsZeros=true}, which is \texttt{false} by default. The option is exploratory and does not rely on \Cref{lem:lem1} when \texttt{separantsZeros=true}. In this case, we have no guarantee that the corresponding elimination ideal is non-trivial. Nevertheless, we have not yet found an example where this does not happen.

\begin{lstlisting}
> Out2:=NLDE:-arithmeticDalg([ADE1,ADE2],[y[1](x),y[2](x)],z=y[1]+y[2],
    lho=false,separantsZeros=true):
> factor(Out2)
\end{lstlisting}
\vspace{-0.5cm}

\begin{dmath}\label{eq:out2_2}
\left(\frac{d}{d x}z \! \left(x \right)-z \! \left(x \right)+1\right) \left(\frac{d}{d x}z \! \left(x \right)-z \! \left(x \right)-1\right) \left(\left(\frac{d^{2}}{d x^{2}}z \! \left(x \right)\right)^{2}-2 \left(\frac{d}{d x}z \! \left(x \right)\right) \left(\frac{d^{2}}{d x^{2}}z \! \left(x \right)\right)\\
+2 \left(\frac{d}{d x}z \! \left(x \right)\right)^{2}-2 z \! \left(x \right) \left(\frac{d}{d x}z \! \left(x \right)\right)+z \! \left(x \right)^{2}-2\right)=0.
\end{dmath}
The particular solutions $1+\lambda \exp(x)$ and $-1+\lambda \exp(x),$ $\forall\,\lambda \in\KK$, are zeros of the factors $\left(\frac{d}{d x}z \! \left(x \right)-z \! \left(x \right)+1\right)$ and $\left(\frac{d}{d x}z \! \left(x \right)-z \! \left(x \right)-1\right)$, respectively.\QEDA 
\end{example}

\begin{example}\label{eg:eg2} 
Let us take the third ADE:
\begin{equation}\label{eq:ade3}
    {y_3'(x)}^3+{y_3'(x)}^2+3=0.
\end{equation}
Its solutions are $\alpha_i\,x + \lambda$, $i=1,2,3$, where $\lambda$ is an arbitrary constant, and the $\alpha_i$'s are roots of the polynomial $X^3+X^2+3=0$. We want to find an ADE for $\frac{f_1\,f_3}{f_2}$, where $f_1,f_2$ and $f_3$ are solutions of \eqref{eq:ade1}, \eqref{eq:ade2}, and \eqref{eq:ade3}, respectively.

The corresponding ODE system is
\begin{equation}\label{eq:odesys2}
    \begin{cases}
     {w_1'}^2 = - {w_1}^2 - 1\\
     w_2'     = w_2\\
     {w_3'}^3 = -3-{w_3'}^2\\
     z        = \frac{w_1\,w_3}{w_2}
    \end{cases}.
\end{equation}
Thus $\mu=(2,1,3)^T, \mathbf{\mathcal{E}}=(0,0,-{w_3'}^2)^T$. The implementation of \Cref{algo:Algo1} yields:
\begin{lstlisting}
> Out4:=NLDE:-arithmeticDalg([ADE1,ADE2,ADE3],
   [y[1](x),y[2](x),y[3](x)],z=y[1]*y[3]/y[2],lho=false):
\end{lstlisting}
\vspace{-0.4cm}

\begin{dmath}\label{eq:out3}
\mathit{Out3} \coloneqq 12 z \! \left(x \right)^{2}+32 \left(\frac{d}{d x}z \! \left(x \right)\right) z \! \left(x \right)+20 \left(\frac{d^{2}}{d x^{2}}z \! \left(x \right)\right) z \! \left(x \right)+6 \left(\frac{d^{3}}{d x^{3}}z \! \left(x \right)\right) z \! \left(x \right)+24 \left(\frac{d}{d x}z \! \left(x \right)\right)^{2}+30 \left(\frac{d^{2}}{d x^{2}}z \! \left(x \right)\right) \left(\frac{d}{d x}z \! \left(x \right)\right)+10 \left(\frac{d^{3}}{d x^{3}}z \! \left(x \right)\right) \left(\frac{d}{d x}z \! \left(x \right)\right)+9 \left(\frac{d^{2}}{d x^{2}}z \! \left(x \right)\right)^{2}+6 \left(\frac{d^{3}}{d x^{3}}z \! \left(x \right)\right) \left(\frac{d^{2}}{d x^{2}}z \! \left(x \right)\right)+\left(\frac{d^{3}}{d x^{3}}z \! \left(x \right)\right)^{2}=0.
\end{dmath}

For this example, \cite[Method II]{RSB2023} gives the same output. The latter is one of the three factors of the ADE obtained with the option \texttt{separantsZeros=true}. The other two factors are the following:

\begin{dmath}\label{eq:out4}
\left(\frac{d^{2}}{d x^{2}}z \! \left(x \right)+2 \frac{d}{d x}z \! \left(x \right)+z \! \left(x \right)\right) \left(\frac{d^{2}}{d x^{2}}z \! \left(x \right)+2 \frac{d}{d x}z \! \left(x \right)+2 z \! \left(x \right)\right)^{2}.
\end{dmath}
These two factors vanish at $\lambda\,\alpha_i\,\exp(-x)$, $i=1,2,3$, $\forall\, \lambda\in\KK$.
\QEDA
\end{example}

 Note that either of the implementations of the Rosendfeld-Gr\"{o}bner algorithm and the Thomas decomposition could not find \eqref{eq:out3} within a reasonable time. The former runs out of memory, and the latter keeps running even after an hour of computations. They tend to behave this way for ADEs of higher degrees. We also mention that even in the situation where one of these algorithms would return a correct result, the exponent in \eqref{eq:out4} could not be found since those algorithms output generators of a radical differential ideal, unlike our algorithm which computes in the corresponding differential ideal. It would be interesting to know what these exponents mean.

\section{Arithmetic of multivariate D-algebraic functions}\label{sec:multivar}

We now focus on solutions of algebraic PDEs. We aim to present an algorithm in the spirit of Method I in \cite{RSB2023}. For simplicity, we start with bivariate D-algebraic functions. Thus the $N$ functions $f_1,\ldots,f_N$ are now bivariate in $\mathbf{x}=x_1,x_2$, and each $f_i$ is a zero of the partial differential polynomial $p_i\in R[y_i^{(\infty,\infty)}]$ of order $n_{i,1}$ w.r.t. $x_1$, and of order $n_{i,2}$ w.r.t. $x_2$. 

\subsection{Ordering on the semigroup of derivative operators}\label{sec:derrule}

We are going to introduce a total ordering on the set of derivative operators with the aim of using them in increasing order up to a given limit. Algorithmically, this will entail to defining a map that sends every non-negative integer to a derivative operator. In \cite{kolchin1973differential} and many references in differential algebra (see also \cite{freitag2020effective}), the notion of order for partial differential rings is different from the one we use. The prevalent definition does not easily adapt to our computational goal. Indeed, given the set of derivations $\Delta_2 = \{\partial_{x_1}, \partial_{x_2}\}$, one considers the free multiplicative semigroup $\Theta_2$ generated by the elements of $\Delta_2$. To any derivative operator 
$$\theta\coloneqq \partial_{x_1}^{n_1}\,\partial_{x_2}^{n_2}\in\Theta,~ n_1,n_2\in\NN,$$
one defines the order $s=n_1 + n_2$. The derivative $\theta u$ of $u\in R$, is called the derivative of $u$ of order $s$ (supposedly w.r.t. $\theta$). The order of a differential polynomial $p\in R[y^{(\infty,\infty)}]$ (also called $\Delta_2$-polynomial ring) is then the maximum $n_i + n_j$ such that $y^{(n_i,n_j)}$ occurs in $p$. What we wish to add in this definition is a total ordering (or ranking) of the derivative operators in $\Theta_2$. Observe that for $\Delta_1=\{\partial_x\}$, there is a natural ranking of the derivative operators:
$$
\{\partial_x^0, \partial_x^1, \partial_x^2,\ldots\},
$$
allowing to \textit{uniquely} associate a non-negative integer with the order of a particular derivative. With the above definition of order for bivariate partial differential polynomial rings, there are two possible choices for the derivative of $y$ of order $1$, namely, $y^{(0,1)}$ and $y^{(1,0)}$. We combine total order and classical orderings (lexicographic or reverse-lexicographic).

The total-order is used to ensure the possibility to move from $y^{(n_1,0)}$ to $y^{(0,n_2)}$ by some applications of the chosen derivation rule. In fact, the \textit{bijective map} between $\NN$ and $\NN^2$ should be ``obvious'' by construction. Thus what we call order for $y^{(n_1,n_2)}$ is either the non-negative integer whose image (or preimage) is $(n_1,n_2)$ through that bijection, or simply $(n_1,n_2)$, assuming $\Theta_2$ is well-ordered (might be implicit for non-computational purposes). We will use the latter, and define a derivation rule based on a bijective map between $\NN^2$ and $\NN$. 

Our idea is to use a ``very'' classical map in set theory. Consider the following start of a ranking for the application of derivative operators 

$$
\partial_{x_1} \longrightarrow \partial_{x_2} \longrightarrow \partial_{x_1}^2 \longrightarrow \partial_{x_1} \partial_{x_2}\longrightarrow \partial_{x_2}^2 \longrightarrow \partial_{x_1}^3 \ldots .
$$

This corresponds to the following ranking of derivative orders

$$
(1,0) \longrightarrow (0,1) \longrightarrow (2,0) \longrightarrow (1,1) \longrightarrow (0,2) \longrightarrow (3,0) \ldots .
$$

One can associate each ordered pair of integers with a unique positive integer. Indeed, the underlying map is the so-called Cantor pairing function, which is a bijection from $\NN^2$ to $\NN$. The underlying ordering of the ordered pairs is the graded (or degree) lexicographic ordering; however, to avoid confusion with the lexicographic ordering that we will use for monomials, we use the terminology of Cantor pairing function or Cantor $l$-tuple functions for arbitrarily many independent variables. As proven by Rudolf Fueter and George Pólya \cite{fueterPolya1923rationale}, the Cantor pairing function is the only quadratic polynomial that maps $\NN^2$ to $\NN$. Therefore for two variables, it appears as the most ``natural and efficient'' choice for processing speed on a regular computer.

We see that such a ranking is inherent to the definition of a total order on $\Theta_2 $ to rank the variable in $S_y$. Note that this ranking is in line with the terminology introduced by Ritt and should not be confused with monomial orderings. Let $\sigma_2$ be the Cantor pairing function, $\sigma_2^{-1}$ its functional inverse such that $\sigma_2^{-1}(k) = (\sigma_{2,1}^{-1}(k),\sigma_{2,2}^{-1}(k))$. For $k\in\NN$, the $k$th derivation, denoted $\theta_{x_1,x_2}^k$, is:
\begin{equation}\label{eq:thetak}
    \begin{split}
        &\theta_{x_1,x_2}^k \colon S_{y} \longrightarrow S_{y}\\
        &y^{(i,j)} \mapsto \partial_{x_1}^{\sigma_{2,1}^{-1}(k)}\partial_{x_2}^{\sigma_{2,2}^{-1}(k)} y^{(i,j)} = y^{(i+\sigma_{2,1}^{-1}(k), j+\sigma_{2,2}^{-1}(k))},\, i,j\in\NN.
    \end{split}
\end{equation}
We view $R[y^{(\infty,\infty)}]$ as 
$$
R[\theta_{x_1,x_2}^{\infty} y^{(0,0)}] \coloneqq R[\theta_{x_1,x_2}^1 y^{(0,0)},\theta_{x_1,x_2}^2y^{(0,0)},\ldots] = R[y^{(0,0)},y^{(1,0)},y^{(0,1)},y^{(2,0)},\ldots].
$$
This implicitly defines an operator which we denote by $\theta_{x_1,x_2}$.

\begin{remark} \item
\begin{itemize}
    \item For all $k\in \NN$,  $\theta_{x_1,x_2}^k \in \Theta_2$, and therefore satisfies the linearity and the Leibniz rules.
    \item On the operator side, $\theta_{x_1,x_2}^k$ may wrongly be seen as the $k$th power of the operator $\theta_{x_1,x_2}$. Although commutativity is induced by the semigroup $\Theta_2$, exponentiation does not follow. In general, for $i,j\in\NN$,
    \begin{equation}\label{eq:nonassoc}
         \theta_{x_1,x_2}^{i+j}\neq \theta_{x_1,x_2}^i \theta_{x_1,x_2}^j = \theta_{x_1,x_2}^j \theta_{x_1,x_2}^i.
    \end{equation}
    The equality is easy to establish since for $\theta_{x_1,x_2}^i=\partial_{x_1}^{i_1}\partial_{x_2}^{i_2}$, and $\theta_{x_1,x_2}^j=\partial_{x_1}^{j_1}\partial_{x_2}^{j_2}$, we have
    $$ \theta_{x_1,x_2}^i \theta_{x_1,x_2}^j = \partial_{x_1}^{i_1+j_1}\partial_{x_2}^{i_2+j_2} = \theta_{x_1,x_2}^j \theta_{x_1,x_2}^i. $$
    Just to give an example, let $p:=y^{(1,2)}$. Then we have
    \begin{equation}
        \theta_{x_1,x_2}^8(p)=y^{(2,4)} \neq \theta_{x_1,x_2}^5\left(\theta_{x_1,x_2}^3(p)\right)= \theta_{x_1,x_2}^3\left(\theta_{x_1,x_2}^5(p)\right)=y^{(3,4)}=\theta_{x_1,x_2}^{12}(p)
    \end{equation}
    
    We will consider the $k$th derivation w.r.t. $\theta_{x_1,x_2}$ as the application of $\theta_{x_1,x_2}^k$, and not $k$ successive applications of $\theta_{x_1,x_2}^1=\partial_{x_1}$.
\end{itemize}
\end{remark}

We introduced $\theta_{x_1, x_2}$ primarily to avoid continuous referencing to the chosen ordering. This will significantly simplify the arguments in the upcoming sections. Needless to say, one can easily avoid explicit links to this operator.

\subsection{Arithmetic of bivariate D-algebraic functions}

 Observe that, unlike other definitions of multivariate D-algebraic functions that restrict their study to D-algebraic functions in each independent variable (see \cite[Section 5.2]{vdH19},\cite[Section 6]{raschel2020counting}), we here have a more general notion. Indeed, a multivariate function can be D-algebraic without being D-algebraic in some of its variables. One can derive proofs for the closure properties by providing bounds for the transcendence degree, as shown in \cite[Section 5.2]{vdH19}. Note, however, that the bounds may not be as sharp as in the ordinary case.
\begin{example} The incomplete $\Gamma$ function $\Gamma_I(x_1,x_2)$ defined as
\begin{equation}\label{eq:incGammadef}
\Gamma_I(x_1,x_2) = \Gamma(x_1) - \frac{x_2^{x_1}\, _1F_1\left(\begin{matrix} x_1 & x_1+1 \end{matrix}, - x_2\right)}{x_1},
\end{equation}
where $_1F_1$ denotes the confluent hypergeometric function, is D-algebraic, and satisfies the linear ADE:
\begin{dmath}\label{eq:incGammaeq}
\left(-x_{1}+x_{2}+1\right) \frac{\partial}{\partial x_{2}}y \! \left(x_{1},x_{2}\right)+x_{2} \frac{\partial^{2}}{\partial x_{2}^{2}}y \! \left(x_{1},x_{2}\right)=0.
\end{dmath}\QEDA
\end{example}
Getting back to our algorithmic goal, we can now write every differential polynomial in $R[y^{(\infty,\infty)}]$ in terms of $\theta_{x_1,x_2}$ derivations.
\begin{example} Since every $y^{(n_1,n_2)}$ is the result of $\theta_{x_1,x_2}^k y^{(0,0)}$, for $k=\sigma_2(n_1,n_2)\in\NN$, we can write $\theta_2^ky$, replacing $\{x_1,x_2\}$ by $2$ and omitting the $(0,0)$ superscript. One can even remove $y$ and the subscript $2$ if there is no ambiguity for the variables involved (and get `polynomials' in $\theta$).
\begin{align}
   &x_1\,(y^{(1,2)})^2\,y^{(0,1)}+x_2\,(y^{(0,0)})^3\,y^{(3,0)}- (y^{(3,1)})^2 = x_1\, (\theta_2^8y)^3\, \theta_2^2y + x_2\,(\theta_2^0y)^3\,\theta_2^6y - (\theta_2^{11}y)^2\\
   &x_1^3\,(y^{(1,1)})^4\,y^{(4,1)} + x_1^3\,x_2^2\,y^{(1,3)} = (\theta_2^4y)^4\,\theta_2^{16}y + x_1^3\,x_2^2\,\theta_2^{13}y
\end{align}
\QEDA
\end{example}
 \begin{remark}\item
\begin{itemize}
    \item $\theta_2^1 x_1 = 1,\, \theta_2^k x_1 = 0,\, \forall k \geq 2$.
    \item $\theta_2^2 x_2 = 1,\, \theta_2^k x_2 = 0, \,\forall k \neq 2$.
\end{itemize}
\end{remark}
\begin{proposition}[Composition property]\label{prop:prop2} For all non-negative integers $k,l$, and a differential indeterminate $y$, we have
\begin{equation}
    \theta_2^k \left( \theta_2^l\,y\right) = \theta_2^{\sigma_2\left(\sigma_2^{-1}(k)+\sigma_2^{-1}(l)\right)} y,
\end{equation}    
where addition for ordered pairs is taken component wise.
\end{proposition}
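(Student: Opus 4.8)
The plan is to reduce the identity to a bookkeeping statement about exponents of commuting partial derivatives. First I would unwind the definition of $\theta_2^k$ as an honest element of the free commutative semigroup $\Theta_2$: by \eqref{eq:thetak}, writing $\sigma_2^{-1}(k)=(k_1,k_2)$, the operator $\theta_2^k$ is nothing but $\partial_{x_1}^{k_1}\partial_{x_2}^{k_2}$. Thus every occurrence of $\theta_2$ can be replaced by a monomial in $\partial_{x_1},\partial_{x_2}$, and the whole proposition becomes a claim about how composition interacts with the index $k$ through the Cantor pairing bijection.

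Next I would compute the left-hand side directly. Setting $\sigma_2^{-1}(k)=(k_1,k_2)$ and $\sigma_2^{-1}(l)=(l_1,l_2)$, one has $\theta_2^l\,y = y^{(l_1,l_2)}$, and applying $\theta_2^k=\partial_{x_1}^{k_1}\partial_{x_2}^{k_2}$ yields
\begin{equation*}
\theta_2^k\!\left(\theta_2^l\,y\right)=\partial_{x_1}^{k_1}\partial_{x_2}^{k_2}\,y^{(l_1,l_2)}=y^{(k_1+l_1,\,k_2+l_2)}.
\end{equation*}
This step uses only that $\partial_{x_1}$ and $\partial_{x_2}$ commute and add exponents — precisely the semigroup computation already recorded in the remark preceding the proposition. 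The exponent tuple that appears is exactly the component-wise sum $\sigma_2^{-1}(k)+\sigma_2^{-1}(l)$.

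For the right-hand side I would set $m=\sigma_2\!\left(\sigma_2^{-1}(k)+\sigma_2^{-1}(l)\right)=\sigma_2(k_1+l_1,\,k_2+l_2)$. By the definition of $\theta_2^m$ in \eqref{eq:thetak} we have $\theta_2^m\,y=y^{(\sigma_{2,1}^{-1}(m),\,\sigma_{2,2}^{-1}(m))}$, and since $\sigma_2^{-1}\circ\sigma_2$ is the identity on $\NN^2$ (the Cantor pairing function is a bijection, as invoked in the text), the index tuple collapses to $(k_1+l_1,\,k_2+l_2)$. Hence both sides equal $y^{(k_1+l_1,\,k_2+l_2)}$ and the identity follows.

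There is no genuine obstacle here; the entire content is translating between the $\NN$-index and the $\NN^2$-exponent via $\sigma_2$. The one point worth emphasizing — and what the proposition really records — is that one must \emph{not} read $\theta_2^k\theta_2^l$ as $\theta_2^{k+l}$: the correct index is recovered only by pushing the component-wise sum of exponents back through $\sigma_2$, exactly because exponentiation of the operator index is incompatible with the semigroup law, as warned by \eqref{eq:nonassoc}.
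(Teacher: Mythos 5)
Your proposal is correct and follows essentially the same route as the paper: both unwind the definition in \eqref{eq:thetak} to see that $\theta_2^k$ adds the tuple $\sigma_2^{-1}(k)$ to the derivative exponent, so both sides equal $y^{\sigma_2^{-1}(k)+\sigma_2^{-1}(l)}$. Your version merely spells out the verification of the right-hand side via $\sigma_2^{-1}\circ\sigma_2=\mathrm{id}$, which the paper leaves implicit.
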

\begin{proof}
    Let $y^{(i,j)}=y^{\sigma_2^{-1}(l)}$ be the result of $\theta_2^ly$. Then according to \eqref{eq:thetak}, 
    $$\theta_2^k y^{(i,j)}=y^{\left(i+\sigma_{2,1}^{-1}(k),j+\sigma_{2,2}^{-1}(k)\right)} = y^{\sigma_2^{-1}(l)+\sigma_2^{-1}(k)}=\theta_2^{\sigma_2\left(\sigma_2^{-1}(k)+\sigma_2^{-1}(l)\right)} y. $$  \CQFD
\end{proof}
To illustrate our use of $\theta_2$, let us take an explanatory example.

\begin{example}\label{ex:illexpl1} Consider
$$
p_1=y_1^{(0,1)}+x_2\,y_1^{(1,1)},\,\text{ and }\, p_2=x_1\,y_2^{(1,0)}-y_2^{(2,0)}.
$$
With $\theta_2$, we have
\begin{align}
&p_1 \coloneqq \theta_2^2y_1+x_2\,\theta_2^4y_1,\\
&p_2 \coloneqq x_1\,\theta_2^1y_2 - \theta_2^3y_2.
\end{align}
Thus w.r.t. $\theta_2$, $p_1$ and $p_2$ are of order $4$ and $3$, respectively. Suppose we want to find an algebraic PDE for the sum $f_1 + f_2$, where $f_i$ is a zero of $p_i,i=1,2$. Our knowledge of the ordinary case suggests that we look for an ADE of order at most $3$ in $x_1$ and $1$ in $x_2$. This is the same as looking for an ADE of $\theta_2$-order at most $11$. We define $z(x_1,x_2)$ (representing $(y_1(x_1,x_2)+y_2(x_1,x_2)$) as the dependent variable for the targeted ADE, and consider the differential polynomial
$$
q\coloneqq z-y_1-y_2\in S_{y_1,y_2,z}.
$$
We take the $11$ first $\theta_2$-derivatives of $q$ and remove those that involve derivatives of orders greater than $3$ w.r.t. $x_1$ (first component) and greater than $1$ w.r.t. $x_2$ (second component). We get
\begin{dmath}\label{eq:derq}
z^{\left(1,0\right)}-y_{1}^{\left(1,0\right)}-y_{2}^{\left(1,0\right)},z^{\left(0,1\right)}-y_{1}^{\left(0,1\right)}-y_{2}^{\left(0,1\right)},z^{\left(2,0\right)}-y_{1}^{\left(2,0\right)}-y_{2}^{\left(2,0\right)},z^{\left(1,1\right)}-y_{1}^{\left(1,1\right)}-y_{2}^{\left(1,1\right)},z^{\left(3,0\right)}-y_{1}^{\left(3,0\right)}-y_{2}^{\left(3,0\right)},z^{\left(2,1\right)}-y_{1}^{\left(2,1\right)}-y_{2}^{\left(2,1\right)},z^{\left(3,1\right)}-y_{1}^{\left(3,1\right)}-y_{2}^{\left(3,1\right)}.
\end{dmath}
We now need to take some $\theta_2$-derivatives of $p_1$ and $p_2$. The idea is to make the highest $\theta_2$-derivatives of $y_1$ and $y_2$ from \eqref{eq:derq} occur. Here we wish to have $\theta_2^{11} y_i = y_i^{(3,1)},i=1,2,$ among the derivatives. Our bound for the number of derivations to be taken is $11$. For this example, it turns out that we only need to take the first $4$ $\theta_2$-derivatives. For $p_1$, we obtain the differential polynomials
\begin{dmath}\label{eq:derp1}
x_{2} y_{1}^{\left(2,1\right)}+y_{1}^{\left(1,1\right)},x_{2} y_{1}^{\left(1,2\right)}+y_{1}^{\left(0,2\right)}+y_{1}^{\left(1,1\right)},x_{2} y_{1}^{\left(3,1\right)}+y_{1}^{\left(2,1\right)},x_{2} y_{1}^{\left(2,2\right)}+y_{1}^{\left(1,2\right)}+y_{1}^{\left(2,1\right)}.
\end{dmath}
And for $p_2$ we obtain
\begin{dmath}\label{eq:derp2}
    x_{1} y_{2}^{\left(2,0\right)}+y_{2}^{\left(1,0\right)}-y_{2}^{\left(3,0\right)},x_{1} y_{2}^{\left(1,1\right)}-y_{2}^{\left(2,1\right)},x_{1} y_{2}^{\left(3,0\right)}+2 y_{2}^{\left(2,0\right)}-y_{2}^{\left(4,0\right)},x_{1} y_{2}^{\left(2,1\right)}+y_{2}^{\left(1,1\right)}-y_{2}^{\left(3,1\right)}.
\end{dmath}
We write these differential polynomials with their dependent variables and not in terms of $\theta_2$-derivations. This is a preliminary step for the Gr\"obner basis elimination that will follow. Note that these computations can be performed without any explicit use of $\partial_{x_1}$ and $\partial_{x_2}$, or any derivative operator in $\Theta_2$. Indeed, we have the natural isomorphism 
\begin{equation}
    \left(\KK(x_1,x_2),\Delta_2\right) \cong \left(\KK(x_1,x_2), \theta_{x_1,x_2}\right), \label{pagerank}
\end{equation}
by the correspondence between $\theta_2$-derivations and elements of $\Theta_2$.

In a similar way as defined in \Cref{lem:lem1}, we consider a lexicographic monomial ordering such that
\begin{itemize}
\item[(I.)] $\theta_2^{k_2} z \succ \theta_2^{k_1} y_i$ for all $i\in\{1,2\}, k_1, k_2\in \NN$,
\item[(II.)] $\theta_2^{k+1} z \succ \theta_2^k z$ and $\theta_2^{k+1} y_{i_1}\succ\theta_2^k y_{i_2}$ for all $i_1,i_2\in\{1,2\}, k\in\NN$.
\end{itemize}
We want to eliminate the lower variables $\theta_2^k y_1, \theta_2^k y_2,k\in\NN$. We compute the elimination ideal
\begin{equation}
 I_{p_1,p_2,q}^{4,4,(3,1)} \cap \KK[x_1,x_2][z^{\left(0,0\right)},z^{\left(1,0\right)},z^{\left(0,1\right)},,z^{\left(2,0\right)},z^{\left(1,1\right)},z^{\left(3,0\right)},z^{\left(2,1\right)},z^{\left(3,1\right)}],
\end{equation}
where $I_{p_1,p_2,q}^{4,4,(3,1)}$ is the ideal containing $p_1$, $p_2$, $q$, and their derivatives from \eqref{eq:derp1}, \eqref{eq:derp2}, and \eqref{eq:derq}. This may be viewed as a ramification of a truncation of the corresponding differential ideal in $S_{y_1,y_2,z}$ with the ranking imposed by $\theta_2$; however, the truncation would be of $\theta_2$-order $11$. We obtain the principal ideal
\begin{equation}\label{eq:elimIpq}
    \left\langle \left(x_{1}^{2} x_{2}+x_{1}+x_{2}\right) z^{\left(1,1\right)}+\left(x_{1}^{2} x_{2}^{2}+x_{2}^{2}-1\right) z^{\left(2,1\right)}+\left(-x_{1} x_{2}^{2}-x_{2}\right) z^{\left(3,1\right)}\right\rangle.
\end{equation}
Therefore the sum of zeros of $p_1$ and $p_2$ are solutions of the associated linear algebraic PDE given by
\begin{dmath}\label{eq:addeq1}
\left(x_{1}^{2} x_{2}+x_{1}+x_{2}\right) \left(\frac{\partial^{2}}{\partial x_{1}\partial x_{2}}z \! \left(x_{1},x_{2}\right)\right)+\left(x_{1}^{2} x_{2}^{2}+x_{2}^{2}-1\right) \left(\frac{\partial^{3}}{\partial x_{1}^{2}\partial x_{2}}z \! \left(x_{1},x_{2}\right)\right)+\left(-x_{1} x_{2}^{2}-x_{2}\right) \left(\frac{\partial^{4}}{\partial x_{1}^{3}\partial x_{2}}z \! \left(x_{1},x_{2}\right)\right)=0.
\end{dmath} \QEDA
\end{example}

In \Cref{ex:illexpl1}, we obtained an ADE whose order is the sum of the orders of $p_1$ and $p_2$ w.r.t. each independent variable. A natural question that we might want to answer is to know whether, like in the ordinary case, the sum of orders of given differential polynomials constitutes a bound for the order of ADEs fulfilled by functions defined as arithmetic expressions of zeros of those polynomials. We will show that this does not hold in general. We give another example to that aim, a tiny modification of \Cref{ex:illexpl1}.

\begin{example}[The orders do not add up]

Let
\begin{equation}\label{eq:eg2p1p2}
p_1\coloneqq x_1\,y_1^{(0,1)}+x_2\,y_1^{(1,1)},\,\text{ and }\, p_2\coloneqq x_1^2\,y_2^{(1,0)}-y_2^{(2,0)}\,x_2.
\end{equation}
As previously, we can expect to find a partial differential polynomial of order at most $3$ in $x_1$ and $1$ in $x_2$ for the sum of zeros of $p_1$ and $p_2$. However, for this example, the elimination ideal is trivial, even when we do the elimination step with the truncation of $\theta_2$-order $11$ of the corresponding differential ideal.

We now look for an algebraic PDE of order $4$ in the first component and order $1$ in the second. This corresponds to the $\theta_2$-order $23$. For the polynomials $p_1$ and $p_2$, we take their first $7$ $\theta_2$-derivatives and proceed as before. Then we compute the elimination ideal
\begin{equation}\label{eq:elimeg2}
      \left\langle \left\{\theta_2^{\leq 7} p_1,\, \theta_2^{\leq 7} p_2, \theta_2^{\leq 23} q\right\} \setminus \left\{\theta_2^k q, \sigma_{2,1}(k)>4, \sigma_{2,2}(k)>1\right\}  \right\rangle \cap \KK[x_1,x_2]\left[z^{(i,j)},\, i\leq 4, j\leq 1\right],
\end{equation}
and obtain a principal ideal whose defining polynomial is given by
\begin{dmath}\label{eq:elimIpq2}
\left(x_{1}^{12}+2 x_{1}^{11}-x_{1}^{10} x_{2}+x_{1}^{10}-2 x_{1}^{9} x_{2}-4 x_{1}^{7} x_{2}^{2}-5 x_{1}^{6} x_{2}^{2}-10 x_{1}^{4} x_{2}^{3}\right) z^{\left(1,1\right)}+\left(x_{1}^{11} x_{2}-3 x_{1}^{9} x_{2}+4 x_{1}^{8} x_{2}^{2}-2 x_{1}^{8} x_{2}+9 x_{1}^{7} x_{2}^{2}+10 x_{1}^{5} x_{2}^{3}-2 x_{1}^{5} x_{2}^{2}+30 x_{1}^{4} x_{2}^{3}+10 x_{1}^{3} x_{2}^{3}\right) z^{\left(2,1\right)}+\left(-2 x_{1}^{9} x_{2}^{2}-3 x_{1}^{8} x_{2}^{2}-3 x_{1}^{6} x_{2}^{3}+x_{1}^{6} x_{2}^{2}-12 x_{1}^{5} x_{2}^{3}-6 x_{1}^{4} x_{2}^{3}+12 x_{1}^{3} x_{2}^{4}+x_{1}^{2} x_{2}^{4}+2 x_{2}^{5}\right) z^{\left(3,1\right)}+\left(x_{1}^{7} x_{2}^{3}+2 x_{1}^{6} x_{2}^{3}+x_{1}^{5} x_{2}^{3}-2 x_{1}^{4} x_{2}^{4}-x_{1}^{3} x_{2}^{4}-2 x_{1} x_{2}^{5}\right) z^{\left(4,1\right)}.
\end{dmath}
The notation $\theta_2^{\leq n} p$ denotes all $\theta_2$-derivatives of $p$ of ($\theta_2$-)order at most $n$.

Hence for this second example, we obtained an ADE whose order (according to any of the definitions mentioned in \Cref{sec:derrule}) is not the sum of the orders of $p_1$ and $p_2$ from \eqref{eq:eg2p1p2}. \QEDA
\end{example}
 
To conclude that this is a general fact, we need to prove that the result of this example does not change if we consider higher $\theta_2$-order truncations of the corresponding differential ideal. Prior to that, we state a useful lemma.

\begin{lemma}[Increasing property]\label{lem:lem2} Let $y$ be a differential indeterminate. Consider the $\theta_2$-ranking in $R[y^{(\infty,\infty)}]$. For all non-negative integers $a_1,a_2,b_1,b_2$, and $k$, we have
\begin{equation}
    y^{(a_1,a_2)} < y^{(b_1,b_2)} \Longrightarrow \theta_2^k y^{(a_1,a_2)} < \theta_2^k y^{(b_1,b_2)}.
\end{equation}
In other words, the application of $\theta_2$ derivations preserves the ranking.
\end{lemma}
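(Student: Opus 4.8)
The plan is to reduce the statement to a purely combinatorial fact about the graded lexicographic order on $\NN^2$ and then verify that fact by a short case analysis. First I would unwind the definition of $\theta_2^k$ from \eqref{eq:thetak}: writing $(c_1,c_2) \coloneqq \sigma_2^{-1}(k)$, one has $\theta_2^k y^{(a_1,a_2)} = y^{(a_1+c_1,\,a_2+c_2)}$ and $\theta_2^k y^{(b_1,b_2)} = y^{(b_1+c_1,\,b_2+c_2)}$. Since the $\theta_2$-ranking declares $y^{(p_1,p_2)} < y^{(q_1,q_2)}$ precisely when $\sigma_2(p_1,p_2) < \sigma_2(q_1,q_2)$, i.e. when $(p_1,p_2)$ precedes $(q_1,q_2)$ in the Cantor enumeration, the implication to be proved becomes: if $(a_1,a_2)$ precedes $(b_1,b_2)$ in that enumeration, then so does $(a_1+c_1,a_2+c_2)$ relative to $(b_1+c_1,b_2+c_2)$. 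In other words, I must show that the Cantor (graded lexicographic) order is invariant under componentwise translation by $(c_1,c_2)$.

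Next I would make the order explicit. The enumeration underlying $\sigma_2$ orders a pair first by its total degree (the sum of its components) and breaks ties among pairs of equal total degree by the second component; concretely, $(a_1,a_2) < (b_1,b_2)$ iff $a_1+a_2 < b_1+b_2$, or $a_1+a_2=b_1+b_2$ and $a_2 < b_2$. I would then split into the two cases implied by this description. If $a_1+a_2 < b_1+b_2$, then adding the constant $c_1+c_2$ to both sides preserves the strict inequality, so $(a_1+c_1)+(a_2+c_2) < (b_1+c_1)+(b_2+c_2)$ and the translated pairs already differ in total degree. If instead $a_1+a_2=b_1+b_2$ and $a_2<b_2$, then the translated pairs have equal total degree while $a_2+c_2 < b_2+c_2$, so the tie-breaking rule yields the conclusion. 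In both cases $(a_1+c_1,a_2+c_2)$ precedes $(b_1+c_1,b_2+c_2)$, which is exactly $\theta_2^k y^{(a_1,a_2)} < \theta_2^k y^{(b_1,b_2)}$.

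The only point requiring care --- and the closest thing to an obstacle in an otherwise routine argument --- is pinning down the order convention exactly: one must confirm that $\sigma_2$ is strictly increasing with respect to this graded lexicographic order (so that the $\theta_2$-ranking coincides with it) and that the tie-break runs along the second coordinate in the direction consistent with the ranking displayed in \Cref{sec:derrule}. Once this is fixed, the argument is immediate, since both the grading map $(p_1,p_2)\mapsto p_1+p_2$ and the tie-breaker $(p_1,p_2)\mapsto p_2$ are additive, hence shift-equivariant, and adding a constant preserves strict inequalities of integers. Alternatively, one can bypass the case analysis entirely by invoking the standard fact that a graded lexicographic order is a monomial order and is therefore translation-invariant; I would nonetheless include the brief two-case verification for self-containedness.
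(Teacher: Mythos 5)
Your proof is correct and follows essentially the same route as the paper's: unwind $\theta_2^k$ as componentwise translation by $\sigma_2^{-1}(k)$ and check that the graded ordering on $\NN^2$ is preserved under adding a constant, splitting into the total-degree case and the tie-break case. If anything, your statement of the ranking (with the tie-break applying only when the total degrees are equal) is slightly more precise than the disjunction written in the paper's proof, but the argument is the same.
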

\begin{proof} Suppose $y^{(a_1,a_2)} < y^{(b_1,b_2)}$. By construction of the $\theta_2$-ranking, this is equivalent to
$$
 \begin{cases}
    a_1+a_2 < b_1 + b_2\\
    \text{or}\\
    a_2 < b_2
\end{cases}.
$$
We have $\theta_2^k y^{(a_1,a_2)} = y^{\sigma_2^{-1}(k) + (a_1,a_2)}$ and $\theta_2^k y^{(b_1,b_2)} = y^{\sigma_2^{-1}(k) + (b_1,b_2)}$. Therefore
$$
a_1+\sigma_{2,1}^{-1}(k)+a_2+\sigma_{2,2}^{-1}(k) < b_1 + \sigma_{2,1}^{-1}(k) + b_2 +\sigma_{2,2}^{-1}(k),\, \text{ and }\, a_2 +\sigma_{2,2}^{-1}(k) < b_2 + \sigma_{2,2}^{-1}(k).
$$
Hence $\theta_2^k y^{(a_1,a_2)} < \theta_2^k y^{(b_1,b_2)}.$\CQFD
\end{proof}

We can now state our theorem.

\begin{theorem}\label{theo:theo2} We use the same notations of this section. For all $N\in \NN$, $r\in\KK(x_1,x_2)(v_1,\ldots,v_N)$, the order of the least-order algebraic PDE fulfilled by the bivariate D-algebraic function given by $r(f_1(x_1,x_2),\ldots,f_N(x_1,x_2))$ is not bounded by the sum of the orders of their defining differential polynomials $p_1,\ldots,p_N$.
\end{theorem}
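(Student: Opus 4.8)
The plan is to promote the specific failure observed in the worked counterexample into a statement that is robust under taking arbitrarily high $\theta_2$-order truncations. Concretely, I would exhibit (or reuse) a pair of partial differential polynomials $p_1,p_2$ of bidegrees/orders $(n_{1,1},n_{1,2})$ and $(n_{2,1},n_{2,2})$ such that the least-order algebraic PDE annihilating $f_1+f_2$ (equivalently, $r(f_1,f_2)$ for a suitable rational $r$) has order strictly exceeding the componentwise sum of orders. The example with $p_1 = x_1\,y_1^{(0,1)}+x_2\,y_1^{(1,1)}$ and $p_2 = x_1^2\,y_2^{(1,0)}-x_2\,y_2^{(2,0)}$ already produced an annihilating PDE of order $(4,1)$ rather than the predicted $(3,1)$; the task is to certify that no lower-order relation exists in the full differential ideal, not merely in the finite truncation computed.

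First I would set up the differential ideal $I$ generated by $p_1,p_2$ and the coupling polynomial $q = z - y_1 - y_2$ inside $S_{y_1,y_2,z}$, saturated as appropriate, and reduce the order question to a transcendence-degree argument of exactly the same flavor used in the proof of \Cref{lem:lem1}. The key object is the elimination ideal $I \cap \KK[x_1,x_2][z^{(\infty,\infty)}]$; I want to show that its intersection with the subspace of $z$-variables of order bounded by $(3,1)$ (i.e. $\theta_2$-order at most $11$) is trivial. The plan is to argue that, after passing to the quotient by the $y_1,y_2$-relations, the images of the bounded $z$-derivatives remain algebraically independent over $\KK(x_1,x_2)$, so that no nonzero polynomial relation of the required order can lie in the ideal. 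Here \Cref{lem:lem2} is the crucial lever: because applying $\theta_2^k$ preserves the ranking, the leading-variable structure of the derived generators is stable as we raise the truncation order, and this lets me control which monomials can possibly be eliminated at each level.

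The main obstacle is precisely the uniformity in the truncation level: computing one finite elimination ideal only shows nonexistence of a low-order relation within that truncation, whereas the theorem asserts nonexistence in the entire differential ideal, i.e. across all truncations simultaneously. I expect to handle this by a stabilization argument: I would show that the transcendence basis of $\KK(x_1,x_2)[\mathbf{w}^{(\leq m)}]/I^{\leq m}$ stabilizes (in the sense relevant to the bounded-$z$ subspace) once $m$ is large enough, so that raising the $\theta_2$-order beyond $11$ cannot create a new relation among the $z^{(i,j)}$ with $(i,j)$ bounded by $(3,1)$. Invoking \cite[Theorem 4.4]{hubert2003notes} for the triangular set at each level gives a common transcendence basis indexed by the non-leading variables, and \Cref{lem:lem2} guarantees that the classification of leading versus non-leading variables is monotone under $\theta_2$; together these pin down the transcendence degree contributed by the bounded $z$-block independently of the truncation.

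Finally I would assemble these pieces into the negative conclusion: since the bounded $z$-derivatives stay algebraically independent modulo the relations coming from $p_1,p_2$, any annihilating PDE for $f_1+f_2$ must involve a $z$-derivative of order at least $(4,1)$, which exceeds the componentwise sum $(3,1)$ of the orders of $p_1$ and $p_2$. This yields a concrete $N$ (here $N=2$) and rational expression $r$ witnessing that the order bound fails, proving the theorem. The delicate point throughout is ensuring the algebraic-independence claim is genuinely truncation-free rather than an artifact of the finite computation, and the combination of the triangular-set transcendence result with the ranking-preservation \Cref{lem:lem2} is what I would rely on to close that gap.
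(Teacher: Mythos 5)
Your proposal is correct and follows essentially the same route as the paper: the proof is by the same counterexample $p_1 = x_1 y_1^{(0,1)}+x_2 y_1^{(1,1)}$, $p_2 = x_1^2 y_2^{(1,0)}-x_2 y_2^{(2,0)}$, and the uniformity over all $\theta_2$-truncations is obtained from \Cref{lem:lem2}, exactly as you propose. The only difference is presentational: where you phrase the certification step as stability of the transcendence degree of the bounded $z$-block (importing the triangular-set machinery of \Cref{lem:lem1}), the paper argues directly that the $\theta_2$-derivatives of $p_1,p_2,q$ are linear with pairwise distinct leading variables, so higher truncations can only contribute elimination-ideal elements of $\theta_2$-order at least that of the one already computed.
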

\begin{proof}
 We prove this by providing a counter-example, which is nothing else but the addition of zeros of $p_1$ and $p_2$ from \eqref{eq:eg2p1p2}. By \Cref{lem:lem2} it holds that the leading terms of the polynomials
 \begin{equation}
    \theta_2^{\infty} p_1,\, \theta_2^{\infty} p_2, \theta_2^{\infty} q \in \KK[x_1,x_2][y_1^{(\infty,\infty)},y_2^{(\infty,\infty)},z^{(\infty,\infty)}]
 \end{equation}
 are all distinct. Since those leading terms are linear over $\KK(x_1,x_2)[y_1^{(\infty,\infty)},y_2^{(\infty,\infty)},z^{(\infty,\infty)}]$, we deduce that each of their higher $\theta_2$-derivative linearly occurs starting from a specific $\theta_2$-truncation of the differential ideal 
 $$
 I_{p_1,p_2,q}\coloneqq\left\langle p_1, p_2, q \right\rangle^{(\infty,\infty)} \subset \KK[x_1,x_2][y_1^{(\infty,\infty)},y_2^{(\infty,\infty)},z^{(\infty,\infty)}].
 $$
 Therefore any (algebraic) elimination ideal computed from higher $\theta_2$-truncations of $I_{p_1,p_2,q}$ would either contain \eqref{eq:elimIpq2}, a $\theta_2$-derivative of it, or another differential polynomial of higher $\theta_2$-order; thus, supporting the fact that no higher $\theta_2$-order truncation of $I_{p_1,p_2,q}$ contains a differential polynomial in $R[z^{(\infty,\infty)}]$ of lower order than \eqref{eq:elimIpq2}. \CQFD
 
 
\end{proof}

Note that the arguments used in the proof of \Cref{theo:theo2} are valid for any l.h.o. partial differential polynomials with initials in $\KK[x_1,x_2]$. Thus it implies that our algorithm always yields the least-order algebraic PDEs for the arithmetic of zeros of l.h.o. differential polynomials with initials in $\KK[x_1,x_2]$, and can therefore be used to define lower bounds for the order in those cases. It would be interesting to study the behavior of the algorithm for arbitrary algebraic PDEs with an analysis of the transcendence basis when the $\theta_2$-derivatives are taken. We think there is a connection between Kolchin polynomials and the growth of the number of partial derivatives involved in the truncated ideals (see \cite[Section 12]{kolchin1973differential}, \cite{levin2021bivariate}).

\subsection{Generalization and algorithm}

We now want to consider arbitrarily many independent variables. Suppose there are $l\geq 2$ independent variables. As we have seen in the previous two subsections, the definition of a total ordering encoded by the operator $\theta_l=\theta_{x_1,\ldots,x_l}$ using a bijective mapping from $\NN^l$ to $\NN$ is inherent to our algorithmic computations. One must always define how partial derivatives of differential indeterminates are selected because there is no natural way to do that like in the ordinary setting. A naive bijection can be defined in a recursive fashion with the Cantor pairing function in the base case. Indeed, one can show by induction that the $l$ maps defined by the recursion
\begin{equation}\label{eq:gammal}
\begin{split}
       &\gamma_j\colon \NN^j \longrightarrow \NN\\
       &\left(i_1,\ldots,i_j\right) \mapsto \sigma_2\left(\gamma_{j-1}(i_1,\ldots,i_{j-1}),i_j\right),
\end{split}
\end{equation}
$j=1,\ldots,l$, $\gamma_1=\text{Id}_{\NN}$, are bijective. However, these maps do not satisfy our desired properties. What is important for us in the choice of this bijection is the fulfillment of a property like \Cref{lem:lem2} that gives theoretical meanings to all outputs of our algorithm for solutions of l.h.o. algebraic PDEs whose initials are polynomials in $\KK[x_1,\ldots,x_l]$. For the $\gamma_j$ functions above, one verifies, for instance, that $(0,1,0)<(0,0,2)$ but $(1,0,0)+(0,1,0)=(1,1,0)>(1,0,2)=(1,0,0)+(0,0,2)$, which is not desired.

Having said that, there could still be several possible choices. For this paper, the choice is the Cantor $l$-tuple function which generalizes our chosen $\sigma_2$ in $l$ dimensions. This corresponds to the inverse lexicographic ordering, often called co-lexicographic ordering. For monomial ordering, this is understood as the degree lexicographic ordering with the variable ranked in the inverse order. Precisely, we consider $\sigma_l$ as the $l$-tuple function that ranks $l$-tuples of integers as follows:
\begin{equation}\label{eq:thetalrank}
    (a_1,\ldots,a_l) < (b_1,\ldots,b_l)\, \iff \, \begin{cases}
        a_1+\cdots+a_l < b_1+\cdots+b_l\\[0.33mm]
        \text{or}\\[0.33mm]
        \begin{cases}
          a_1+\cdots+a_l = b_1+\cdots+b_l,\, \text{ and}\\
          \exists\, i, j,\, 1\leq j \leq l\,, 1\leq i \leq l-j,\, \colon \, a_{j+i}=b_{j+i}, a_j<b_j
        \end{cases}
    \end{cases}.
\end{equation} 
We then take $\theta_l$ as the derivation rule that uses $\sigma_l$ and its reciprocal $\sigma_l^{-1}$ similarly as $\theta_2$ uses $\sigma_2$ and its reciprocal $\sigma_2^{-1}$. Note that an explicit algebraic formula may be derived for this Cantor $l$-tuple function. We will not dive into such details; the combinatorial investigation can be found in  \cite{ruskey2003combinatorial}. We only require fulfillment of properties like \Cref{prop:prop2} and \Cref{lem:lem2}. The increasing property generalizes to arbitrarily many independent variables as we show in the next proposition.

\begin{proposition}[Increasing property continued] Let $y$ be a differential indeterminate. Consider the $\theta_l$-ranking in $R[y^{(\infty,\overset{l}{\ldots},\infty)}]$. For all non-negative integer-valued tuples $a,b\in\NN^l$ and an integer $k$, we have
\begin{equation}
    y^a < y^b \Longrightarrow \theta_l^k y^a < \theta_l^k y^b.
\end{equation}
\end{proposition}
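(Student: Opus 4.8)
The plan is to mirror the proof of \Cref{lem:lem2} exactly, replacing the two-variable ordering condition by the $l$-variable condition given in \eqref{eq:thetalrank}. The whole argument rests on translating the ranking inequality $y^a < y^b$ into arithmetic conditions on the tuples $a=(a_1,\ldots,a_l)$ and $b=(b_1,\ldots,b_l)$, applying the component-wise shift $\sigma_l^{-1}(k)$ to both, and observing that this shift preserves each of the conditions.

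First I would unpack the hypothesis $y^a < y^b$ using the definition of the $\theta_l$-ranking. By \eqref{eq:thetalrank}, this means either $\sum_{i=1}^l a_i < \sum_{i=1}^l b_i$, or else the total degrees are equal and there is an index $j$ at which $a_j < b_j$ while all higher-indexed components agree. Writing $c\coloneqq\sigma_l^{-1}(k)=(c_1,\ldots,c_l)\in\NN^l$, \Cref{prop:prop2} (its $l$-variable analogue, which the preceding paragraph invokes) gives $\theta_l^k y^a = y^{a+c}$ and $\theta_l^k y^b = y^{b+c}$, where addition is component-wise. I would then simply add $c$ to both tuples in whichever case holds.

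The key step is that each disjunct in \eqref{eq:thetalrank} is stable under adding the same tuple $c$ to both sides. In the first case, $\sum_i a_i < \sum_i b_i$ implies $\sum_i (a_i+c_i) < \sum_i (b_i+c_i)$, since the same quantity $\sum_i c_i$ is added on both sides. In the second case, the total-degree equality $\sum_i a_i = \sum_i b_i$ becomes $\sum_i (a_i+c_i)=\sum_i (b_i+c_i)$, the strict inequality $a_j<b_j$ becomes $a_j+c_j<b_j+c_j$, and the equalities $a_{j+i}=b_{j+i}$ for the higher indices become $a_{j+i}+c_{j+i}=b_{j+i}+c_{j+i}$; thus the very same index $j$ witnesses the ordering for $a+c$ versus $b+c$. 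In either case we conclude $y^{a+c}<y^{b+c}$, that is, $\theta_l^k y^a < \theta_l^k y^b$, as required.

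I expect no genuine obstacle here, since the proof is a routine verification that a degree-then-lexicographic comparison is invariant under uniform translation, exactly as in the bivariate \Cref{lem:lem2}. The only point requiring minor care is notational: ensuring that the index $j$ and the ``all higher components equal'' clause from \eqref{eq:thetalrank} are handled for general $l$ rather than being short-circuited as in the two-variable case, and that the $l$-variable composition property used to rewrite $\theta_l^k y^a$ as $y^{a+\sigma_l^{-1}(k)}$ is legitimately available (it is, being the stated generalization of \Cref{prop:prop2}). The proof therefore reduces to the two short implications above together with the component-wise additivity of the shift.
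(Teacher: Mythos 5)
Your proposal is correct and follows essentially the same route as the paper's proof: both rewrite $\theta_l^k y^a$ as $y^{a+\sigma_l^{-1}(k)}$ and observe that adding the same nonnegative tuple to both $a$ and $b$ preserves the ranking of \eqref{eq:thetalrank}. You merely spell out the two disjuncts of the ordering definition explicitly, which the paper leaves as an immediate consequence of the definition.
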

\begin{proof} Let $a,b\in\NN^l$ such that $y^a<y^b$ w.r.t. the $\theta_l$-ranking. By definition $\theta_l^k y^a = y^{\sigma_l^{-1}(k)+a}$. Thus the application of $\theta_l^k$ to $y^a$ and $y^b$ adds $\sigma_{l,i}^{-1}(k)\geq 0$ to each $a_i$ and $b_i$, $i=1,\ldots,k$. This preserves the inequality by the definition of the ranking in \eqref{eq:thetalrank}.\CQFD
\end{proof}
One easily verifies that a similar property as \Cref{prop:prop2} also holds. Hence using $\theta_l$ we can compute ADEs for arithmetic of multivariate D-algebraic functions. Our algorithm that summarizes all the necessary steps is given by \Cref{algo:Algo2}.


\begin{algorithm}[!ht]\caption{Searching ADEs fulfilled by rational expressions of multivariate D-algebraic functions}\label{algo:Algo2}
    \begin{algorithmic} 
    \Require A rational function $r\in\KK(\mathbf{x},y_1,\ldots,y_N)$, and $N$ ADEs associated to the differential polynomials $p_i\in \KK(x_1,\ldots,x_l)[y_i^{(\infty,\overset{l}{\ldots},\infty)}]$ all of derivations $\Delta_2=\{\partial_{x_i},\, i=1,\ldots,l\}$ (or simply $\theta_{x_1,\ldots,x_l}\coloneqq\theta_l$). 
    \Ensure $0$ (unsuccessful situation) or an ADE fulfilled by $r(f_1,\ldots,f_N)$, where $f_i$ is a zero of $p_i$, $i=1,\ldots,N$. By default, for each independent variable, the maximum order of the ADE sought is the sum of the orders w.r.t. that variable. Optionally, one can choose $(\nu_1,\ldots,\nu_l)$ as a maximum order and make the algorithm starts from step \ref{stepstart}. If a desired ADE exists within the prescribed order bound, then it will be found in the conditions of \Cref{theo:theo3}; otherwise, a higher order bound is required.
    \begin{enumerate}
        \item For each differential polynomial $p_i$, let $n_{i,j},\, j=1,\ldots,l$ be the orders of $p_i$ w.r.t. $x_j$.
        \item Set $\mathcal{N}\coloneqq (\nu_1,\ldots,\nu_l) = \left(\sum_{i=1}^N n_{i,1},\ldots, \sum_{i=1}^N n_{i,l}\right)$.
        \item Set $\nu=\sigma_l(\nu_1,\ldots,\nu_l)$. //\textit{Comment:} $(\nu_1,\ldots,\nu_l)$ can be optionally given in the input.
        \item\label{stepstart} Let $\mathcal{R}$ be the numerator of the rational expression $w-r(y_1,\ldots,y_N)$.
        \item Let $E$ be the set
        $$
          \left\{ \theta_l^k \mathcal{R}: \sigma_{l,1}^{-1}(k)\leq \nu_1,\ldots, \sigma_{l,l}^{-1}(k)\leq \nu_l,\,  k=0,\ldots,\nu \right\}
        $$
        \item Let $(m_1,\ldots, m_l)$ be the minimum order (w.r.t. the $\theta_l$-ranking) that appears among the orders $(n_{i,1},\ldots,n_{i,l}), i=1,\ldots,N$.
        \item Let $d=\sigma_l(\nu_1-m_1,\ldots,\nu_l-m_l)$.
        
        \item Update the set $E$ by adding the polynomials $p_1,\ldots,p_N$, and their first $d$ $\theta_l$-derivatives to it,i.e.,
        $$
            E \coloneqq E \cup \left\{ \theta_l^k p_i, k=1,\ldots,d, i=1,\ldots,N \right\}
        $$
        
        \item Let $D_i\in\NN^l$, be the maximum order w.r.t. the $\theta_l$-ranking of the differential indeterminate $y_i$ in $E$, $i=1,\ldots,N$.
        
        \item Let
        $$
            I\coloneqq \langle E \rangle \subset \KK[x_1,\ldots,x_l][y_1^{\leq D_1},\ldots,y_N^{\leq D_N},z^{(\leq \nu_1,\ldots \leq \nu_l)}],
        $$
        be the ideal defined by the polynomials in $E$.
        \item Compute the elimination ideal
        $$
            \mathcal{I} \coloneqq I \cap \KK[x_1,\ldots,x_l][z^{(\leq \nu_1,\ldots \leq \nu_l)}],
        $$
        using the lexicographic monomial ordering as defined in (I.) and (II.) from page \pageref{pagerank}. 

\end{enumerate}
 \algstore{pause1}
 \end{algorithmic}
 \end{algorithm}

  \clearpage 

 \begin{algorithm}
     \caption{Searching ADEs fulfilled by rational expressions of multivariate D-algebraic functions}
     \begin{algorithmic}
         \algrestore{pause1}	
         \State
    \begin{enumerate}
         \setcounter{enumi}{11}

        \item If $\mathcal{I}\neq \langle 0 \rangle$ then:
            \begin{itemize}
                \item Let $p$ be the polynomial of lowest degree among those of lowest $\theta_l$-order in $\mathcal{I}$.
                \item \textbf{Stop and return} the writing of $p=0$ in terms of partial derivatives.
            \end{itemize}
        \item ($\mathcal{I} = \langle 0 \rangle$) While $ d \leq \nu$ do:
            \begin{itemize}
                \item $d \coloneqq d+1$ (update $d$ by $d+1$)
                \item Repeat steps 8 to 12
            \end{itemize}
        \item \textbf{Return} $0$ (meaning that ``No ADE of order component-wisely less than $(\nu_1,\ldots,\nu_l)$ found'').
    \end{enumerate}
    \end{algorithmic}
\end{algorithm}

\vspace*{-.5cm}

\begin{theorem}\label{theo:theo3} \Cref{algo:Algo2} is correct, and when successful (non-zero output) with input ADEs that are l.h.o. algebraic PDEs with initials in $\KK[x_1,\ldots,x_l]$, it returns an algebraic PDE of least-order w.r.t. the total ordering induced by $\theta_l$.
\end{theorem}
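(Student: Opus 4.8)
The plan is to separate the two assertions --- soundness of any non-zero output, and minimality of the order under the l.h.o. hypothesis --- and to reduce both to the machinery already developed for \Cref{lem:lem1} and \Cref{lem:lem2}.

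First I would settle correctness (soundness). Every polynomial placed in $E$ is a $\theta_l$-derivative either of some input $p_i$ or of the numerator $\mathcal{R}$ of $w-r(y_1,\ldots,y_N)$. Evaluating at the generic solution $y_i=f_i$, $z=r(f_1,\ldots,f_N)$, we have $p_i=0$ and $\mathcal{R}=0$, hence $\theta_l^k p_i=0$ and $\theta_l^k \mathcal{R}=0$ for every admissible $k$; so the whole ideal $I=\langle E\rangle$ vanishes at this point. Since an element of the elimination ideal $\mathcal{I}=I\cap\KK[\mathbf{x}][z^{(\leq\nu_1,\ldots,\leq\nu_l)}]$ involves only $z$-derivatives yet still vanishes there, the returned $p=0$ is an algebraic PDE satisfied by $r(f_1,\ldots,f_N)$. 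The branch that returns $0$ merely reports that no ADE was found within the prescribed order bound, which is correct by construction.

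The substance lies in the least-order claim. Working over the coefficient field $\KK(x_1,\ldots,x_l)$ with the lexicographic order of (I.)--(II.), I would first observe that, because each $p_i$ is l.h.o. with initial in $\KK[x_1,\ldots,x_l]$, its leading term is a single highest $\theta_l$-derivative of $y_i$ occurring linearly with a nonzero coefficient free of the $y$-derivatives. Differentiation preserves this shape, and by the composition property (\Cref{prop:prop2}) the leading derivative of $\theta_l^k p_i$ advances predictably. Invoking the increasing property (\Cref{lem:lem2} and its $l$-variable generalization), distinct pairs $(i,k)$ yield distinct leading $y$-derivatives; together with the distinct $z$-derivatives coming from the $\theta_l^k q$, and since different indeterminates cannot collide, the family $\{\theta_l^k p_i\}\cup\{\theta_l^k q\}$ has pairwise distinct leading variables, each linear --- exactly the situation of \Cref{lem:lem1}. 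Hence it is a consistent triangular set, and by the cited result of Hubert all associated primes share the transcendence basis given by the non-leading variables. A transcendence-degree count, identical in spirit to the $M+1>M$ argument of \Cref{lem:lem1}, then forces $\mathcal{I}\neq\langle 0\rangle$ exactly when the $z$-derivatives available (capped at $(\nu_1,\ldots,\nu_l)$) outnumber the transcendence degree remaining after the $y$-leading variables are eliminated.

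The hard part will be the minimality itself: showing that no enlargement of the truncation --- a larger $d$, or a higher admissible $z$-order --- can surface a $z$-only relation of strictly smaller $\theta_l$-order than the first one detected. Here I would reuse the mechanism behind \Cref{theo:theo2}: since the increasing property guarantees that every higher $\theta_l$-derivative strictly advances its leading term in the $\theta_l$-ranking, any hypothetical lower-order $z$-relation would already lie in a smaller truncated ideal and would therefore have been produced at a smaller $d$. Because the loop increments $d$ from its minimal value and halts at the first non-trivial $\mathcal{I}$, the polynomial it selects --- lowest $\theta_l$-order, then lowest degree --- is necessarily of least $\theta_l$-order among all differential polynomials in the elimination of $I_{p_1,\ldots,p_N,q}$ that involve only $z$. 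Carrying this advancement argument through cleanly in the $l$-variable setting, where the order is a tuple ranked by $\sigma_l$ rather than a single integer, is the delicate point that the l.h.o.-with-polynomial-initials hypothesis is designed to make tractable.
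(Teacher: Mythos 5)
Your proposal is correct and follows essentially the same route as the paper: correctness from the fact that every element of $E$ (hence of the elimination ideal) vanishes on the generic solution, and least-order in the l.h.o.\ case by the same mechanism as in the proof of \Cref{theo:theo2} --- distinct, linearly occurring leading $\theta_l$-derivatives guaranteed by \Cref{lem:lem2}, so that the first non-trivial elimination already contains the least-order relation. In fact your write-up spells out the transcendence-degree and triangular-set details more explicitly than the paper's own proof, which simply defers to \Cref{theo:theo2}.
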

\begin{proof} As we have seen in the previous section, \Cref{algo:Algo2} is an exhaustion of the possibilities to find a least-order ADE with a given bound. Therefore correctness easily follows. However, when the algorithm halts, there is no evidence that the output is necessarily of minimal order. In the case of l.h.o. algebraic PDEs whose initials are polynomials in the independent variables, the justification is the same as for the counter-example given in the proof of \Cref{theo:theo2}: the first non-trivial elimination always yields the least-order algebraic PDE possible.\CQFD
\end{proof}

\begin{remark} We may speed up \Cref{algo:Algo2} by checking the algebraic dependency with a Jacobian condition (see \cite[Theorem 2.2]{ehrenborg1993apolarity}) instead of always computing Gr\"obner bases.
\end{remark}

Let us say a few words on the reason why the triangular set arguments (see \Cref{lem:lem1}) do not apply the same way to guarantee the minimality of the order for arbitrary algebraic PDEs like in the ordinary case. When applying the $\theta_l$-derivations, the transcendence degree increases because some variables occur independently of the algebraic dependencies encoded by a single PDE. This may explain why partial differential equations are often studied as systems instead of single equations like in the ordinary setting. To illustrate this phenomenon, consider our example of bivariate differential polynomials $p_1=\theta_2^2y_1 + x_2\theta_2^4y_1$, $p_2=x_1\theta_2^1y_2-\theta_2^3y_2$, for which we were looking for an ADE fulfilled by sums of their zeros. Using the $\theta_2$-ranking, the following system can be used to portray how the algebraic dependencies occur.
 \begin{equation}\label{eq:sysmultvars}
     \begin{cases}
         \theta_2^1 y_1 = w_1 \\
         \theta_2^2 y_1 = w_2\\
         \theta_2^3 y_1 = w_3\\
         \theta_2^4 y_1 =-\frac{w_2}{x_2}\\
         \theta_2^1 y_2 = w_4\\
         \theta_2^2 y_2 = w_5\\         \theta_2^3 y_2 = x_1\, w_4\\
                   z    = y_1+y_2
    \end{cases}.
 \end{equation}
 Since these are linear equations, substitution easily does the elimination. The main question is to know whether we can always substitute higher $\theta_2$-derivatives of $y_1$ and $y_2$ in higher $\theta_2$-derivatives of $z$ by some rational expression in $w_1,\ldots,w_5$. The answer is no since applying $\theta_2^4$ to the last equation yields
 $$
 \theta_2^4 z = \theta_2^4 y_1 + \theta_2^4 y_2 = -\frac{w_2}{x_2} + \theta_2^4 y_2,
 $$
and there is no link to $\theta_2^4 y_2$. Thus $\theta_2^4 y_2$ could be seen as a new transcendent element to consider in the elimination process. Such a situation never happens with ordinary differential equations because $\theta_1^{k+1}y=\theta_1 (\theta_1^k y)$. Note that an algebraic dependency between $z$-derivatives will certainly arise at some point; however, it does not seem obvious to predict when that will occur, like in the ODE case. 

\subsection{Implementation}

Our package \texttt{NLDE} contains a subpackage called \texttt{MultiDalg} that is being designed to perform operations with multivariate D-algebraic functions. The current implementation of \Cref{algo:Algo2} is \texttt{arithmeticMDalg} in \texttt{MultiDalg} for both unary and $N$-ary arithmetic of multivariate D-algebraic functions. The syntax is similar to that of \texttt{NLDE:-arithmeticDalg}. The monomial ordering used is either the lexicographic (default) or the lexdeg ordering with the corresponding $\theta_l$-ranking, where $l$ is the number of independent variables.

\begin{example} Consider the three PDEs:
\begin{equation}\label{eq:eq13}
    \frac{\partial f}{\partial x} = \frac{\partial f}{\partial y},\, \frac{\partial f}{\partial y} = \frac{\partial f}{\partial z}, \, \frac{\partial f}{\partial z} = \frac{\partial f}{\partial x}.
\end{equation}
We view them as three independent PDEs and would like to compute a fourth PDE for the sum of their solutions. Using our package, this can be done with the code below.
\begin{lstlisting}
> with(NLDE:-MultiDalg): #to load the subpackage.
> ADE1:=diff(U(x,y,z),x)=diff(U(x,y,z),y):
> ADE2:=diff(V(x,y,z),y)=diff(V(x,y,z),z):
> ADE3:=diff(W(x,y,z),z)=diff(W(x,y,z),x):
> arithmeticMDalg([ADE1,ADE2,ADE3],[U(x,y,z),V(x,y,z),W(x,y,z)],T=U+V+W)
\end{lstlisting}
\begin{dmath}\label{eq:eg1MultiDalg}
\frac{\partial^{3}}{\partial x^{2}\partial y}T \! \left(x ,y ,z \right)-\frac{\partial^{3}}{\partial x \partial y^{2}}T \! \left(x ,y ,z \right)-\frac{\partial^{3}}{\partial x^{2}\partial z}T \! \left(x ,y ,z \right)+\frac{\partial^{3}}{\partial y^{2}\partial z}T \! \left(x ,y ,z \right)+\frac{\partial^{3}}{\partial x \partial z^{2}}T \! \left(x ,y ,z \right)-\frac{\partial^{3}}{\partial y \partial z^{2}}T \! \left(x ,y ,z \right)=0
\end{dmath}
One can use Maple's \texttt{pdsolve} command to check that sum of solutions to the PDEs in \eqref{eq:eq13} are solutions of \eqref{eq:eg1MultiDalg} and that the solutions of \eqref{eq:eg1MultiDalg} are sums of solutions to the PDEs in \eqref{eq:eq13}. A nice algebraic perspective for solving such linear PDEs is given in \cite[Section 3.3]{michalek2021invitation}. \QEDA
\end{example}

\begin{example} We show how to recover the results of our explanatory examples with our implementation. To get \eqref{eq:addeq1}, one uses the code below. We do not display the output as it is identical to \eqref{eq:addeq1}. The last line gives the result.
\begin{lstlisting}
> ADE1:=diff(y[1](x[1],x[2]),x[1],x[2])*x[2]+diff(y[1](x[1],x[2]),x[2])=0:
> ADE2:=diff(y[2](x[1],x[2]),x[1])*x[1]-diff(y[2](x[1],x[2]),x[1],x[1])=0:
> arithmeticMDalg([ADE1,ADE2],[y[1](x[1],x[2]),y[2](x[1],x[2])],z=y[1]+y[2]):
\end{lstlisting}
 For the second example, to check that there is no ADE of order less than $(3,1)$, component wise, one uses the following code.
\begin{lstlisting}
> ADE1:=diff(y[1](x[1],x[2]),x[1],x[2])*x[2]+diff(y[1](x[1],x[2]),x[2])*x[1]=0:
> ADE2:=x[1]^2*diff(y[2](x[1],x[2]),x[1])-diff(y[2](x[1],x[2]),x[1],x[1])*x[2]=0:
> arithmeticMDalg([ADE1,ADE2],[y[1](x[1],x[2]),y[2](x[1],x[2])],z=y[1]+y[2])
\end{lstlisting}
\begin{dmath}
    0.
\end{dmath}
The $0$ in the output should be interpreted as ``no ADE of order less than the sum of the orders was found''. To get the ADE associated to \eqref{eq:elimIpq2}, one supplies the optional argument \texttt{maxord=[4,1]} and gets:
\begin{lstlisting}
> arithmeticMDalg([DE1,DE2],[y[1](x[1],x[2]),y[2](x[1],x[2])],
                                z=y[1]+y[2],maxord=[4,1])
\end{lstlisting}
\begin{dmath}\label{eq:eg2MultiDalg}
\left(x_{1}^{12}+2 x_{1}^{11}-x_{1}^{10} x_{2}+x_{1}^{10}-2 x_{1}^{9} x_{2}-4 x_{1}^{7} x_{2}^{2}-5 x_{1}^{6} x_{2}^{2}-10 x_{1}^{4} x_{2}^{3}\right) \left(\frac{\partial^{2}}{\partial x_{1}\partial x_{2}}z \! \left(x_{1},x_{2}\right)\right)+\left(x_{1}^{11} x_{2}-3 x_{1}^{9} x_{2}+4 x_{1}^{8} x_{2}^{2}-2 x_{1}^{8} x_{2}+9 x_{1}^{7} x_{2}^{2}+10 x_{1}^{5} x_{2}^{3}-2 x_{1}^{5} x_{2}^{2}+30 x_{1}^{4} x_{2}^{3}+10 x_{1}^{3} x_{2}^{3}\right) \left(\frac{\partial^{3}}{\partial x_{1}^{2}\partial x_{2}}z \! \left(x_{1},x_{2}\right)\right)+\left(-2 x_{1}^{9} x_{2}^{2}-3 x_{1}^{8} x_{2}^{2}-3 x_{1}^{6} x_{2}^{3}+x_{1}^{6} x_{2}^{2}-12 x_{1}^{5} x_{2}^{3}-6 x_{1}^{4} x_{2}^{3}+12 x_{1}^{3} x_{2}^{4}+x_{1}^{2} x_{2}^{4}+2 x_{2}^{5}\right) \left(\frac{\partial^{4}}{\partial x_{1}^{3}\partial x_{2}}z \! \left(x_{1},x_{2}\right)\right)+\left(x_{1}^{7} x_{2}^{3}+2 x_{1}^{6} x_{2}^{3}+x_{1}^{5} x_{2}^{3}-2 x_{1}^{4} x_{2}^{4}-x_{1}^{3} x_{2}^{4}-2 x_{1} x_{2}^{5}\right) \left(\frac{\partial^{5}}{\partial x_{1}^{4}\partial x_{2}}z \! \left(x_{1},x_{2}\right)\right)=0.
\end{dmath}\QEDA
\end{example}
\section{Discussions: applications and future developments}\label{sec:disc}
\subsection{Applications}\label{sec:applications}

Let us discuss the potential application of our results, and in particular, the use of our software. First of all, it is worthwhile to remember that these elimination techniques are familiar to many scientists from many disciplines. However, it is often the case that the software they use, when the computations involved are tedious to do by hand, is tailored to a very specific problem and cannot be easily adapted when the underlying (D-algebraic) functions are changed. Usually one can notice that their accompanying codes for related problems are adaptations of an old code that was designed for some popular D-algebraic functions. See for instance, the recurrent use of elimination to find the fourth Painlev\'{e} transcendent of Hamiltonian representations in \cite{dzhamay2021different}. Since several functions in the sciences such as Physics, Biology, or Statistics are defined by algebraic ordinary or partial differential equations, usually with some initial values, one can use our software to compute differential equations satisfied by rational expressions of solutions to given ADEs. Let us demonstrate this with some concrete examples.
\begin{example} In \cite[Problem 6.24]{teschl2012ordinary}, the following algebraic ODE is derived from the Korteweg-de Vries equation:
\begin{equation}\label{eq:appl1}
    - c\, v'(x) + v^{(3)}(x)+ 6\,v(x)\,v'(x) = 0.
\end{equation}
Physicists are interested in solutions which satisfy $\lim_{x\longrightarrow \pm \infty} v(x) =0$. This limits the admissible parameter $c$ on the shape of $v(x)$. One looks for a linear transformation of $v$ that eliminates the term $-c\,v'$ from \eqref{eq:appl1}. Using \texttt{NLDE}, we compute an ADE for any such linear transformation and deduce a desired equation.
\begin{lstlisting}
> ADEKV:=-c*diff(v(x),x)+diff(v(x),x,x,x)+6*v(x)*diff(v(x),x)=0:
> NLDE:-unaryDalg(ADEKV,v(x),w=C[1]*v+C[2])
\end{lstlisting}

\vspace{-0.3cm}

\begin{dmath}\label{eq:appl1C1C2}
6 \, w \! \left(x \right) \left(\frac{d}{d x}w \! \left(x \right)\right)+\left(-c\,C_{1}-6\,C_{2}\right) \left(\frac{d}{d x}w \! \left(x \right)\right)+C_{1} \left(\frac{d^{3}}{d x^{3}}w \! \left(x \right)\right)=0.
\end{dmath}
From this output, one deduces that for $C_2=-(c\,C_1)/6$ the transformation $w=C_1\,v+C_2$ eliminates the first derivative in the resulting equation. Thus we have found infinitely many transformations to eliminate the first derivative. A simple one would be $w=-v+c/6$, which yields
\begin{lstlisting}
> NLDE:-unaryDalg(ADEKV,v(x),w=-v+c/6)
\end{lstlisting}

\vspace{-0.3cm}

\begin{dmath}\label{eq:appl1res}
6 w \! \left(x \right) \left(\frac{d}{d x}w \! \left(x \right)\right)-\frac{d^{3}}{d x^{3}}w \! \left(x \right)=0.
\end{dmath}
Note that we recover the same result of the book by using the linear transformation $w=(-6\,v+c)/12$ (given in the book as $v=-2\,w+c/6)$. This gives the ADE $w^{(3)}=12\,w'\,w$. \QEDA
\end{example}

\begin{example} Observe that \eqref{eq:appl1} can be integrated once to get
\begin{equation}\label{eq:appl2}
    v'' - c\,v + 3\, v^2 + a = 0,
\end{equation}
where $a$ is the constant of integration. Note that the shape of $v$ also depends on $a$. In \cite[Problem 6.24]{teschl2012ordinary}, it is mentioned that $v(x)=-2\wp(x)+c/6$ satisfies \eqref{eq:appl2}, where $\wp(x)$ is the Weierstrass elliptic function. Let us verify this fact. We use our software to get an ADE fulfilled by the rational expression $-2\wp(x)+c/6$ from the ADE of the Weierstrass elliptic function.
\begin{lstlisting}
> ADEwp:=diff(p(x),x)^2=4*p(x)^3-g[2]*p(x)-g[3]: #Weierstrass equation
> ADE1:=NLDE:-unaryDalg(ADEwp,p(x),v=-2*p+c/6)
\end{lstlisting}

\vspace{-0.4cm}

\begin{dmath*}
\mathit{ADE1} \coloneqq 216 v \! \left(x \right)^{3}-108 v \! \left(x \right)^{2} c +\left(18 c^{2}-216 g_{2} \right) v \! \left(x \right)+108 \left(\frac{d}{d x}v \! \left(x \right)\right)^{2}-c^{3}+36 c g_{2} +432 g_{3} =0.
\end{dmath*}
Since \eqref{eq:appl2} is a second-order ADE, we take one derivative of the result and factor the output.
\begin{lstlisting}
> factor(diff(ADE1,x))
\end{lstlisting}
\begin{dmath}\label{eq:appl22}
18 \left(\frac{d}{d x}v \! \left(x \right)\right) \left(36 v \! \left(x \right)^{2}-12 c\,v \! \left(x \right)+c^{2}+12 \frac{d^{2}}{d x^{2}}v \! \left(x \right)-12 g_{2} \right)=0.
\end{dmath}
Neglecting the constant solutions, we are interested in the solutions of the ADE
\begin{equation}
   \frac{d^{2}}{d x^{2}}v \! \left(x \right) - c\,v \! \left(x \right) +  3 v \! \left(x \right)^{2} + \left(\frac{c^{2}}{12} - g_{2}\right)=0.
\end{equation}
Therefore by taking the constant of integration $a$ as $\left(\frac{c^{2}}{12} - g_{2}\right)$, we deduce that the claim holds.
\QEDA
\end{example}
The above two examples present how one can apply our result to ``basic'' (symbolic) calculations in Physics. Note that the Korteweg-de Vries equation is an algebraic PDE that models shallow water waves. One of its outstanding features is the existence of so-called \textit{soliton}, i.e., waves that travel in time without changing their shape (see \cite[Problem 6.24]{teschl2012ordinary}).

For algebraic PDEs, there seems to be more interest in numerical methods than symbolic ones. The usual non-expressiveness of general solutions of PDEs may explain this. Nevertheless, as for algebraic ODEs, our software can compute ADEs satisfied by rational expressions of solutions to algebraic PDEs (see \Cref{eg:bonus}). However, note that so far, we only considered additions of solutions of linear algebraic PDEs. The main reason behind that is because finding least-order ADEs for addition of solutions to linear ADEs can be performed with Gauss elimination, and so carrying out the same computations using Gr\"obner bases is essentially the same. When it comes to other operations like the product, Gauss elimination will not generally give the desired least-order ADE, and then the computations with Gr\"obner bases pay the price of dealing with higher degree equations. For multivariate D-algebraic functions, this price is notoriously visible by the high CPU time encountered when running our implementation for product and division using the pure lexicographic monomial ordering. This behavior could be predicted by the number of variables involved in the elimination steps of \Cref{algo:Algo2}.  We recommend the specification \texttt{ordering=lexdeg} which often (not always) enables computations in shorter timings.

There are many situations in which the algorithm proves useful. Let us give an idea of applications of \Cref{algo:Algo2} related to solving PDEs. We construct PDEs out of solutions to known ODEs. This may be seen as a \textit{reverse engineering} approach of the method of separating variables. The latter consists of finding solutions that can be expressed rationally in terms of solutions of ODEs in each independent variable. For instance, given a PDE in the independent variables $x_1$ and $x_2$, we commonly look for solutions of the product form $F_1(x_1)\,F_2(x_2)$, where $F_1$ and $F_2$ are solutions of some ODEs in $x_1$ and $x_2$, respectively. The product form is not very interesting for linear ODEs with constant coefficients, because viewing $F(x_i)$ as a constant for a linear ODE in $x_j, i\neq j$, implies already that $F(x_i)\,F(x_j)$ is a solution of that ODE. Therefore the algorithm may just return a derivative of that ODE. We would like to consider more tricky examples.
\begin{example} Suppose we want to construct an algebraic PDE whose solutions are of the form
\begin{equation}\label{eq:shapeLogis}
    \frac{1}{1+\exp(a\,x_1)\,\exp(b\,x_2)} = \frac{1}{1+\exp\left(a\, x_1 + b\, x_2\right)},
\end{equation}
with arbitrary constants $a,b$. Note that this is a well-known form in Statistics for the \textit{logistic distribution} (see for instance \cite{logisticDbalakrishnan}), where one of the variables would be interpreted as the \textit{mean}. 

We use the linear ADEs $y_1'-a\,y_1=0$ for $\exp(a\,x_1)$ and $y_2'-b\,y_2=0$ for $\exp(b\,x_2)$. Observe that these ADEs could also be taken as PDEs, i.e, $y_1^{(1,0)} - a\,y_1=0$ and $y_2^{(0,1)} - b\,y_2 =0$; however, using these PDEs would lead to an algebraic PDE whose solutions have a more general form. For instance, a solution of $y^{(1,0)} - a y=0$ is of the form $F(x_2)\, \exp(a\,x_1)$, which is not desired for our target form \eqref{eq:shapeLogis}.

We here reveal a feature of our implementation that enables algebraic ODE inputs. The computations still use $\theta_l$ derivations as described in \Cref{algo:Algo2}. This decreases the number of variables used in the elimination step and makes the code get the result more efficiently. Indeed, when we apply $\theta_2$ derivations, $y_1$ and its derivatives w.r.t. $x_1$ are treated as constants for derivation w.r.t. $x_2$, while $y_2$ and its derivatives w.r.t. $x_2$ are treated as constants for derivation w.r.t. $x_1$. This results in the cancellation of all variables $y_1^{(i, j)}$ and $y_2^{(j, i)}$, $i, j\in \NN$, with $j\neq 0$. Let us now compute the desired algebraic PDE.

\begin{lstlisting}
> ADE1:=-a*y[1](x[1]) + diff(y[1](x[1]), x[1]) = 0:
> ADE2:=-b*y[2](x[2]) + diff(y[2](x[2]), x[2]) = 0:
> arithmeticMDalg([ADE1,ADE2],[y[1](x[1]),y[2](x[2])],
                       z=1/(1+y[1]*y[2]),[x[1],x[2]])
\end{lstlisting}

\vspace{-0.3cm}

\begin{dmath}\label{eq:lasteg1}
z \! \left(x_{1},x_{2}\right)^{2} b -b z \! \left(x_{1},x_{2}\right)-\frac{\partial}{\partial x_{2}}z \! \left(x_{1},x_{2}\right)=0.
\end{dmath}
The obtained equation is the logistic differential equation, which could also be seen as an ODE in $x_2$. This corresponds to solutions of the form
\begin{equation}\label{eq:sollasteg1}
z \! \left(x_{1},x_{2}\right)=\frac{1}{1+{\mathrm e}^{b x_{2}} \textit{F} \! \left(x_{1}\right)},
\end{equation}
where $F$ is an arbitrary function in $x_1$. This is a more restricted form compared to the result obtained with PDEs as inputs. \QEDA
\end{example}

\begin{example} Taking the same algebraic ODEs as inputs, we now look for a PDE whose solutions have the form
\begin{equation}\label{eq:form2}
    \frac{\exp(a\,x_1)}{1+\exp(a\,x_1+b\,x_2)}.
\end{equation}
Our package yields:
\begin{lstlisting}
> arithmeticMDalg([ADE1,ADE2],[y[1](x[1]),y[2](x[2])],
                       z=y[1]/(1+y[1]*y[2]),[x[1],x[2]])
\end{lstlisting}

\vspace{-0.3cm}

\begin{dmath}\label{eq:eqform2}
a b z \! \left(x_{1},x_{2}\right)+\left(\frac{\partial}{\partial x_{2}}z \! \left(x_{1},x_{2}\right)\right) a -\left(\frac{\partial}{\partial x_{1}}z \! \left(x_{1},x_{2}\right)\right) b =0.
\end{dmath}
In this case, Maple \texttt{pdsolve} command finds the solution:
\begin{dmath}\label{eq:solform2}
z \! \left(x_{1},x_{2}\right)=\textit{F} \! \left(\frac{x_{1} a +b x_{2}}{b}\right) {\mathrm e}^{x_{1} a},
\end{dmath}
for an arbitrary function $F$. It is obvious that \eqref{eq:form2} is of this form. The obtained equation is a linear PDE that can also be used for the symbolic integration of \eqref{eq:form2} (see \cite{zeilberger1991method}, \cite{koutschan2009advanced}).\QEDA
\end{example}

Remark that the calling syntax of \texttt{arithmeticMDalg} in the above two examples contains a bracket list with all the independent variables, here $x_1,x_2$, given as the last argument. This is how the code is called when there are algebraic ODEs in the input.

\begin{example}[Bonus example]\label{eg:bonus} We give a PDE fulfilled by the square of the probability density function
$$f(x,\mu) \coloneqq \frac{1}{\sigma \sqrt{2\,\pi}}\exp\left(-\frac{1}{2}\left(\frac{x-\mu}{\sigma}\right)^2\right),$$ 
of the univariate normal distribution $\mathcal{N}(\mu,\sigma^2)$, seen as a bivariate function in the mean and the indeterminate variable. We use \texttt{FPS:-HolonomicPDE} (see \cite{FPS}) to derive a partial PDE w.r.t. each independent variable, and use \texttt{arithmeticMDalg} to find an algebraic PDE for the product of their solutions.
\begin{lstlisting}
> f:=exp(-((x-mu)/sigma)^2/2)/(sigma*sqrt(2*Pi)):
> PDE1:=FPS:-HolonomicPDE(f,u(x,mu),partialwrt=x)
\end{lstlisting}
\begin{equation}\label{eq:pde1Normal}
\mathit{PDE1} \coloneqq \sigma^{2} \left(\frac{\partial}{\partial x}u \! \left(x ,\mu \right)\right)-\left(-x +\mu \right) u \! \left(x ,\mu \right)=0
\end{equation}

\vspace{-0.12cm}

\begin{lstlisting}
> PDE2:=FPS:-HolonomicPDE(f,v(x,mu),partialwrt=mu)
\end{lstlisting}
\begin{equation}\label{eq:pde2Normal}
\mathit{PDE2} \coloneqq \sigma^{2} \left(\frac{\partial}{\partial \mu}v \! \left(x ,\mu \right)\right)-\left(x -\mu \right) v \! \left(x ,\mu \right)=0
\end{equation}

\vspace{-0.12cm}

\begin{lstlisting}
> arithmeticMDalg([PDE1,PDE2],[u(x,mu),v(x,mu)],z=u*v)
\end{lstlisting}
\begin{equation}\label{eq:pdeNormalsquare}
-z \! \left(x ,\mu \right) \left(\frac{\partial^{2}}{\partial \mu \partial x}z \! \left(x ,\mu \right)\right) \sigma^{2}+\left(\frac{\partial}{\partial \mu}z \! \left(x ,\mu \right)\right) \left(\frac{\partial}{\partial x}z \! \left(x ,\mu \right)\right) \sigma^{2}+2 z \! \left(x ,\mu \right)^{2}=0.
\end{equation}
Solutions to this PDE have the form
\begin{equation}
    F_2 \! \left(x \right) \exp\left(F_1 \left(\mu \right)+\frac{2 x \mu}{\sigma^{2}}\right),
\end{equation}
with arbitrary functions $F_1$ and $F_2$.
\QEDA
\end{example}

\subsection{Conclusion and future work}

We completed the exposition of \cite{RSB2023} on univariate D-algebraic functions by providing an algorithm (see \Cref{algo:Algo1}) to compute least-order ADEs for rational expressions of solutions to ADEs that are not linear in their highest occurring derivatives. This was accompanied by \Cref{lem:lem1}, which established its correctness. We have also proposed an algorithm for doing these computations with multivariate D-algebraic functions using an ordering on the semigroup of derivations that allows us to keep track of the application of partial derivatives (see \Cref{theo:theo3} and \Cref{algo:Algo2}). From that, we established \Cref{theo:theo2}, which states that the sum of orders of given algebraic PDEs is not always a bound for the order of an algebraic PDE fulfilled by rational expressions of solutions to those ADEs. We implemented the given algorithm using the computer algebra system Maple. The accompanying software can be downloaded from \href{https://mathrepo.mis.mpg.de/OperationsForDAlgebraicFunctions/index.html}{MathRepo NLDE} or \cite{NLDE} (with the source code) as a subpackage of the \texttt{NLDE} package. In certain calculations, our software may find similar results with the known packages \cite{gfun}, \cite{Mgfun}, \cite{kauers2019multivariate}; however, we are here in a more general setting as we mentioned earlier. In the univariate setting, \Cref{algo:Algo1} compares better with \cite{bachler2012algorithmic,boulier1995representation}. \Cref{eg:eg2} illustrates one such example where our algorithm outperforms those existing methods. We developed our approach from the works in \cite{hong2020global,dong2023differential}. Our software can be used in many scientific disciplines where eliminations of variables in partial or ordinary differential equations are of interest. We have given some in \Cref{sec:applications}. A Maple worksheet with the examples of the article can be downloaded from \href{https://mathrepo.mis.mpg.de/ArithmeticOfDAlgebraicFunctions}{Arithmetic of D-Algebraic functions} or \cite{NLDE}.

We remark from references on partial differential equations that the compositions of solutions of algebraic PDEs with solutions of algebraic ODEs is often used to simplify algebraic PDEs. For instance, to get the equation in \eqref{eq:appl1}, one considers the composition $u(x,t)=v(x-c\,t)$ with the Korteweg-de Vries equation
\begin{equation}
    \frac{\partial}{\partial t} u(t,x) + \frac{\partial^3}{\partial x^3} u(t,x) + 6\,u(t,x)\,\frac{\partial }{\partial x} u(t,x) = 0.
\end{equation}
We think that an adaptation of \Cref{algo:Algo2} for compositions can be used to make such transformations of algebraic PDEs into algebraic ODEs more systematic.

We also highlight the potential use of $\theta_l$-derivations for symbolic integration, which is also concerned with elimination. Indeed, given a symbolic expression in several variables $f(x_1,\ldots,x_n)$, it is possible to develop a Fasenmyer-like algorithm (see \cite[Theorem 11.4]{koepf2021computer}) that looks for linear algebraic PDEs with eliminated variables $S\subset \{x_1,\ldots,x_n\}$ among the coefficients, such that a differential equation for the integral $\int_S f(x_1,\ldots,x_n)$ can be deduced as a generalization of Feynman’s method. A preliminary implementation of this technique is provided in the \texttt{FPS} package by the \texttt{HolonomicPDE} command  (see \cite{FPS}). For a general knowledge of symbolic integration, we refer the reader to the non-exhaustive list of references \cite{zeilberger1991method,chyzak2000extension,kauers2011,koutschan2009advanced,chen2017some}.

\medskip

\textbf{Acknowledgment.} The author thanks Bernd Sturmfels for encouraging this work. He appreciates Gleb Pogudin for reading and commenting on earlier versions of the article, and the anonymous referees for their valuable comments. He also thanks Rida Ait El Manssour and Anna-Laura Sattelberger for useful discussions.

\end{document}